\newcommand{\V}[1]{\ensuremath{\boldsymbol{#1}}\xspace}
\theoremstyle{definition}
\newtheorem{thm}{Theorem}
\newtheorem{algo}{Algorithm}
\newtheorem{prop}{Proposition}
\newtheorem{coro}{Corollary}
\newtheorem{lem}{Lemma}
\newtheorem{rem}{Remark}
    \newtheorem{ass}{Assumption}
\newcommand{\e}{\mathbb{E}}
\newcommand{\p}{\mathbb{P}}
\newcommand{\bR}{\mathbb{R}}
\newcommand{\mbone}{{\mathbf{1}}}
\newcommand{\diag}{\text{diag}}
\newcommand{\norm}[1]{\Vert{#1}\Vert}
\begin{document}

\def\text#1{\mbox{\rm #1}}
\def\overset#1#2{\stackrel{#1}{#2} }

\def\mb{\mathbf}
\def\mr{\mathrm}
\def\dsum{\displaystyle\sum}
\def\dint{\displaystyle\int}
\def\dfrac{\displaystyle\frac}

\renewcommand{\baselinestretch}{1}
%
%
\newcommand{\pkg}[1]{\textsf{#1}}

\title{Community models for networks observed through edge nominations}

\vspace{0.5cm}

\author{Tianxi Li\\
\vspace{-0.4cm}
Department of Statistics, University of Virginia\\
 Elizaveta Levina and Ji Zhu \\ Department of Statistics, University of Michigan}
\date{\today}
\maketitle

\begin{abstract}

Communities are a common and widely studied structure in networks, typically under the assumption that the network is fully and correctly observed.   In practice, network data are often collected by querying nodes about their connections.  In some settings, all edges of a sampled node will be recorded, and in others, a node may be asked to name its connections. These sampling mechanisms introduce noise and bias which can obscure the community structure and invalidate assumptions underlying standard community detection methods.  We propose a general model for a class of network sampling mechanisms based on recording edges via querying nodes, designed to improve community detection for network data collected in this fashion.   We model edge sampling probabilities as a function of both individual preferences and community parameters, and show community detection can be performed by spectral clustering under this general class of models.   We also propose, as a special case of the general framework, a parametric model for directed networks we call the nomination stochastic block model, which allows for meaningful parameter interpretations and can be fitted by the method of moments.  Both spectral clustering and the method of moments in this case are computationally efficient and come with theoretical guarantees of consistency.  We evaluate the proposed model in simulation studies on both unweighted and weighted networks and apply it to a faculty hiring dataset, discovering a meaningful hierarchy of communities among US business schools.

\end{abstract}

\section{Introduction}\label{sec:intro}
Networks have been widely used to describe relationships between individuals or interactions between units of complex systems in numerous fields, including biology, computer science, sociology and economics, and giving insights into many natural phenomena such as gene regulation, friendship formation, and eco-system evolution \citep{newman2010networks}.   Community detection, the task of clustering nodes into groups with relatively homogeneous connection patterns,  has been intensively studied since communities occur naturally in many real-world networks \citep{fortunato2010community, goldenberg2010survey}.  
Many statistical network models with communities have now been proposed, from the simple stochastic block model \citep{holland1983stochastic}  to more complex  extensions with  mixed membership \citep{airoldi2008mixed} or temporal evolution \citep{xu2013dynamic,matias2017statistical}.    Such models can provide a rigorous statistical framework and theoretical performance guarantees (see, for example, \citet{rohe2011spectral, zhao2012consistency}  for early work and  \cite{abbe2017community} for a recent review), as well as lead to improved algorithms, e.g. , \cite{joseph2016impact, gao2015achieving}.

A practical difficulty in many empirical studies of networks arises from imperfect data collection. We loosely use the term ``edge nomination'' for any situation where edge information is obtained through a data collection mechanism which may not record the entire networks. This may occur in observational studies where interactions or connections are recorded from observations of the experimenters \citep{zachary1977information,hass1991social,connor1992dolphin,gleiser2003community}, and interactions not observed  would be missing from the data.   This can also include traditional surveys, since many social networks are constructed by asking subjects  to name their friends or contacts \citep{michell1996peer, harris2009national}.    Sometimes these surveys limit how many friends one can name, and sometimes subjects may choose to name their friends selectively.   Another example is internet crawlers that only follow some of the paths \citep{clauset2015systematic,ji2016coauthorship}.    In all these situations, the missing edges may undermine the validity or efficiency of standard network analysis methods. 


One important property that often arises in the aforementioned settings is that which edges are missing may depend on the properties of the individual node reporting them, which automatically invalidates all missing completely at random assumptions.   
We will use the term \emph{nomination network} to refer to any situation where the missing edge pattern may depend on the node from which the edge information is collected.

Missing edges in networks can also be viewed as erroneous observations (a 0 instead of a 1).  
There has been a significant amount of work on denoising networks, which often considers both missing edges and falsely reported edges.  \cite{butts2003network} propose a Bayesian method to evaluate how reliable an observed network is. Following a similar set of model assumptions,  \cite{newman2018networka} propose a link prediction framework to recover underlying networks without specific structures.  \cite{newman2018networkb} extends this work to a general framework to estimate networks under non-informative observational errors. Under the framework of exponential random graph models,  \cite{handcock2010modeling} study ways to handle general ignorable missing mechanisms. Related link prediction problems are studied in \cite{zhao2017link,wu2018link}. However, \cite{zhao2017link} focus on the general model-free link prediction without specific structural assumptions and are not directly applicable to community detection problems. For networks with communities,  \cite{guimera2009missing} propose a Bayesian model and inference method to detect both missing and spurious edges. \cite{martin2016structural} take a similar modeling strategy but assume more flexible nonparametric error distributions.

All these models for noisy networks assume the missing mechanism is independent of any network structure such as communities. In some situations, this assumption is reasonable, for instance, for recording errors, but for a network resulting from a survey, this assumption is hard to justify. For example, in a high school survey of friendships, there may be individual differences in whether to name friends from their own``true” community.    Here the missing mechanism potentially depends on both the community structure and individual node characteristics, requiring different models from those used for network denoising.   Recently, \cite{le2017estimating} considered a scenario where the missing mechanism depends on the community labels of the node pairs in the context of jointly analyzing multiple networks sampled from the same probability model, but their method does not apply to a single network. 

In this paper, we introduce a general framework of modeling communities in networks collected from this type of edge nomination and collection procedure, where the observed relations suffer from missingness. The framework can be used for both unweighted and weighted networks.  We also propose a new directed network model we call the {\em nomination stochastic block model} (NSBM), a special case of the general framework which has interpretable model parameters.  Under this model, we propose computationally efficient algorithms based on spectral clustering and the method of moments for community detection and parameter estimation, respectively, and show statistical consistency for both communities and estimated parameters.

The rest of the paper is organized as follows. In Section~\ref{sec:model}, we introduce the general framework and the NSBM, along with algorithms for  community detection and parameter estimation. In Section~\ref{sec:theory}, we establish community detection consistency under the general framework and parameter estimation consistency under the NSBM.  Simulation studies are presented in Section~\ref{sec:sim},  and an application of NSBM to a business school faculty hiring network in U.S. universities in Section~\ref{sec:BSchool}.    Section~\ref{sec:con} concludes with a discussion.    Proofs and additional results of the business school data analysis are included in the Appendix.

 \section{Community models for networks with edge nominations}\label{sec:model}

While networks constructed by reported links are often converted to undirected by ignoring the source of the edge, any network where edge information is collected from the individual nodes is directed by nature. Therefore, we start with a brief review of the directed version of the standard SBM and then present the new nomination model. 

\subsection{A general nomination framework based on the directed stochastic block model}
The stochastic block model (SBM) \citep{holland1983stochastic} is one of the most widely used and well understood models for communities in a network. It has been shown to recover communities in various settings successfully, and can serve as a building block for more complicated models; see \cite{abbe2017community} for a thorough recent review.  

A  network of $n$ nodes can be represented by an $n\times n$ adjacency matrix $A$ such that each entry $A_{ij} = \mbone(i \to j)$ is 1 if there is an edge from node $i$ to node $j$ and 0 otherwise.     The standard SBM is defined for undirected networks, where $A_{ij} = A_{ji}$.  
The directed extension has been studied by \cite{rohe2016co}, which in our context, reduces to the following model: given $n$ nodes, a positive integer $K$ and a $K\times K$ matrix of probabilities $B$, let $c_i \in \{1, \dots, K\}$ be the community label of node $i$, and $\V{c}$ be the vector of  community labels.    Here we treat $\V{c}$ as fixed.   Let
$G_k = \{i: c_i = k\}$
be the set of nodes in community $k$ and $n_k  = |G_k|$. The entries of the adjacency matrix $A$ are then generated independently from the Bernoulli distribution with 
\begin{equation}\label{eq:DSBM}
P(A_{ij} = 1) = B_{c_ic_j} 
\end{equation}
The difference between the undirected and the directed models is that the directed model does not require $B$ to be symmetric.   

Errors in recording network edges are common and can arise in a variety of ways.   We focus on the situation when some connections are missing but all observed connections are true edges. This is different from the setting considered in \cite{zhao2017link, newman2018networka}, where falsely reported edges are also allowed, but it is a reasonable assumption in many applications \citep{zachary1977information,hass1991social,connor1992dolphin,gleiser2003community,clauset2015systematic,ji2016coauthorship}, as discussed in Section~\ref{sec:intro}.  In particular, this is exactly the setting for the network of hiring relationships analyzed in Section~\ref{sec:BSchool}.   Let $\tilde{A}$  be the adjacency matrix we observe, with potentially missing edges, where $\tilde{A}_{ij} = 1$ indicates node $i$ reported that there is an edge from it to node $j$.  The generating process for $\tilde{A}$ can be thought of as taking  the original network $A$ generated from model \eqref{eq:DSBM} and applying a binary nomination ``mask'' matrix $R \in \{0,1\}^{n\times n}$, so that the observed matrix is  given by 
$$\tilde{A} = A\circ R, $$
where $\circ$ is the element-wise Hadamard matrix product. Here $R_{ij} = 1$  indicates that node $i$ revealed its connection to node $j$. In this paper, we assume a nominated link is always a true link in $A$, while $\tilde{A}_{ij} = 0$ may result from either $A_{ij} = 0$ or $R_{ij} = 0$.

We base our model for $R_{ij}$ on the following two considerations. By nature of the edge nomination process, the probability of the edge $A_{ij}$ being reported by node $i$ should depend on node $i$.   It is natural to assume that it also depends on how close nodes $i$ and $j$ are, which can be expressed through $P_{ij}$.   We therefore propose the following general model for the observed network:  
\begin{align}\label{eq:generic-model-hierarchical}
A_{ij} &\sim \text{Bernoulli}(B_{c_ic_j}), \text{~~independently, ~} \notag \\
R_{ij} &\sim \text{Bernoulli}(f_i(B_{c_ic_j})), \text{~~independently,~} \\
\tilde{A}_{ij} &= A_{ij} \cdot R_{ij} \, , \notag
\end{align}
where $f_i: [0,1] \to [0,1]$ is the {\em nomination function} of node $i$. This general form includes some of the previously studied settings.   For example,  when $f_i \equiv \rho_i$ for all $i \in [n]$, every node randomly nominates each of its links with a fixed probability $\rho_i$, the setting studied in \cite{butts2003network}. If for some nodes $f_i \equiv 0$ (all links are missing) and for others $f_i \equiv 1$ (all links from node $i$ are reported), we obtain the egocentric sampling setting  \citep{handcock2010modeling,krivitsky2017inference, wu2018link}.

In most situations, we are interested in learning about properties of the network expressed in $\V{c}, B$, or $f_i$'s rather than predicting the latent status $R_{ij}$.   We can integrate out $R_{ij}$ and write the distribution of $\tilde{A}$ directly as 
\begin{align}\label{eq:generic-model}
\p(\tilde{A}_{ij}=1) = \tilde P_{ij} = B_{c_ic_j}f_i(Bc_ic_j) = F_{i}(B_{c_ic_j})
\end{align}
where $F_i(x) = xf_i(x)$.   The general model  defined by \eqref{eq:generic-model} can be specialized to many different forms by specifying $f_i$.   One advantage of model  \eqref{eq:generic-model} is explicitly incorporating an informative missing mechanism through its dependence on the strength of the connection between the nodes.

\subsection{Community detection under the general edge nomination model}

One advantage of writing out a general model like \eqref{eq:generic-model} is the possibility of developing a general algorithm for solving problems of this type.  For the standard SBM and its degree corrected version \citep{karrer2011stochastic}, spectral clustering algorithms are among the most popular methods for estimating community labels \citep{rohe2011spectral, lei2014consistency, jin2015fast}, due to their many advantages: easy implementation, computational efficiency, and excellent theoretical properties.    Generally speaking, spectral clustering is used for community detection problems because under many models the community information can be recovered from the eigenvectors of the adjacency matrix or its Laplacian.   We next investigate whether this principle applies to our general model.   

Intuitively, since each node uses an individual function $F_i$ to nominate links which is applied to its entire row, we would expect the individual nomination preferences confounded in the row space of $\tilde A$.   However, the {\em column} space of $\tilde A$ should still reflect communities, since each node $i$ applies the same function to all entries of the column $j$.   This intuition is formalized in the following proposition about the singular value decomposition (SVD) of $\tilde{P}$.

\begin{prop}\label{lem:eigen}
Let $\tilde{P} = \tilde{U}\tilde{D}\tilde{V}^T$ be the SVD of $\tilde{P}$. There exists a matrix $X \in \bR^{K\times K}$ such that 
\begin{equation}\label{eq:eigen-structure}
\tilde{V} = ZX
\end{equation}
where $Z$ is the $n\times K$ community membership matrix, defined by $Z_{ik} = 1(c_i = k)$. In addition, $\norm{X_{k\cdot}-X_{l\cdot}}_2 = \sqrt{n_k^{-1}+n_{l}^{-1}}$ for any $1\le k \ne l \le K$.
\end{prop}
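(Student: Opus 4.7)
The plan is to exploit the simple block structure of $\tilde P$ induced by the community labels of the column index. From \eqref{eq:generic-model}, $\tilde P_{ij}=F_i(B_{c_ic_j})$ depends on $j$ only through $c_j$. Defining $M\in\bR^{n\times K}$ by $M_{ik}=F_i(B_{c_ik})$, this immediately gives the factorization $\tilde P = MZ^T$, where $Z$ is the membership matrix. Equivalently, $\tilde P^T = ZM^T$, so every column of $\tilde P^T$ (equivalently every row of $\tilde P$, regarded as a column vector in $\bR^n$) lies in the column span of $Z$.

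Next, the columns of the right singular matrix $\tilde V$ form an orthonormal basis for the column space of $\tilde P^T$, and each of them is therefore a linear combination of columns of $Z$. This yields $\tilde V = ZX$ for some $X\in\bR^{K\times K}$. I am tacitly assuming here that $\rank(\tilde P)=K$ and that every community is nonempty, so that $\tilde V$ indeed has exactly $K$ columns and $Z$ has full column rank $K$.

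For the distance claim, I would start from the orthonormality relation $\tilde V^T\tilde V = I_K$. Substituting $\tilde V=ZX$ and using the fact that $Z^T Z = \diag(n_1,\ldots,n_K) =: D_n$, this becomes $X^T D_n X = I_K$. Hence $Y:=D_n^{1/2}X$ is an orthogonal $K\times K$ matrix, so in particular its rows are orthonormal. Since $X=D_n^{-1/2}Y$, the $k$-th row satisfies $X_{k\cdot}=n_k^{-1/2}Y_{k\cdot}$, and expanding
\[
\norm{X_{k\cdot}-X_{l\cdot}}_2^2 = n_k^{-1}\norm{Y_{k\cdot}}_2^2+n_l^{-1}\norm{Y_{l\cdot}}_2^2 - 2(n_kn_l)^{-1/2}\langle Y_{k\cdot},Y_{l\cdot}\rangle
\]
collapses to $n_k^{-1}+n_l^{-1}$ by orthonormality of the rows of $Y$, as desired.

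The main obstacle is really the tacit rank condition $\rank(\tilde P)=K$, since without it the factorization $\tilde V=ZX$ need not produce a square matrix $X$. This reduces to $\rank(M)=K$, which is a nondegeneracy requirement jointly on $B$ and the nomination functions $F_i$; I expect it to be covered by the identifiability conditions imposed elsewhere in the paper. Apart from this, the argument is essentially three lines of linear algebra, with no probabilistic content at all.
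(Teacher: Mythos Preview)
Your proposal is correct and essentially matches the paper's proof. The paper writes $\tilde P = FZ^T$ (your $M$ is their $F$), sets $\Delta=\diag(\sqrt{n_1},\ldots,\sqrt{n_K})$, takes the SVD $F\Delta=UDV^T$, and reads off $\tilde V = Z\Delta^{-1}V$, i.e.\ $X=\Delta^{-1}V$; your orthogonal matrix $Y=D_n^{1/2}X$ is exactly their $V$, and both arguments finish the distance claim from the orthonormality of its rows. Your explicit flagging of the rank assumption $\rank(\tilde P)=K$ is appropriate; the paper handles this by a standing full-rank assumption on $B$ (stated at the start of Section~\ref{sec:theory}) together with the remark that then $V$ is $K\times K$ orthogonal.
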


Proposition \ref{lem:eigen} suggests the right singular vectors of $\tilde{A}$ can be used to recover communities, as long as $\tilde{A}$ concentrates around $\tilde{P}$.  
We formalize this approach in the following algorithm, which we call  ``Right singular vectors Spectral Clustering" (Right SC).

\begin{algo}[Right SC]\label{algo:RightSC}
Given an adjacency matrix $\tilde{A}$ and the number of communities $K$:
\begin{enumerate}
\item Compute the rank $K$ truncated SVD $\tilde{A}$,  given by $\tilde A = \widehat{U}\widehat{D}\widehat{V}^T$.
\item Run  the $K$-means clustering algorithm on rows of $\widehat{V}$ to assign each node to a community.  
\end{enumerate}
\end{algo}

While optimizing the $K$-means loss is NP-hard, there are many efficient algorithms that find approximate solutions.    For obtaining theoretical guarantees, we will assume a version of the $K$-means algorithm that finds a value of the objective function that is at most $(1+\epsilon)$ of the global minimum;  this can be found efficiently for a small positive constant $\epsilon$ \citep{kumar2004simple}.

We will show that the Right SC algorithm mis-clusters a vanishing proportion of nodes  (see Section~\ref{sec:theory}). However, even though in practice we often observed exact recovery, it is known to be hard to prove an exact recovery guarantee for $K$-means based algorithms, except when there is some special structure \citep{abbe2017entrywise, li2018hierarchical, lei2020consistency}.

Instead, we introduce another spectral method, Spectral Minimum Spanning Tree Clustering (Right SMST), which also uses the right singular vectors.   The clusters are obtained by cutting the minimum spanning tree between embedded nodes; an algorithm studied in \cite{vu2018simple} and \cite{lei2017generic}. This algorithm is much easier to analyze than K-means, and we show in Section~\ref{sec:theory} that it can achieve the exact recovery of community labels for all nodes. However, in practice, the Right SC is always faster, and more importantly, much better on sparse networks. Therefore, the SMST algorithm is primarily of theoretical interest.  

\begin{algo}[Right SMST]\label{algo:RightSMST}
Given an adjacency matrix $\tilde{A}$ and the number of communities $K$:
\begin{enumerate}
\item Compute the rank-$K$ truncated singular value decomposition $\tilde{A} = \widehat{U}\widehat{D}\widehat{V}^T$.
\item Run minimum spanning tree algorithm of \cite{vu2018simple} on $\widehat{V}$:
\begin{enumerate}
\item Construct the undirected distance graph between $n$ nodes based on the distance matrix, where the edge weight between $i$ and $j$ is the distance  between $\widehat{V}_{i\cdot}$ and $\widehat{V}_{j\cdot}$, the $i$-th and $j$-th rows of the matrix $\widehat{V}$.
\item Find the minimum spanning tree of the distance graph. 
\item Remove the $K-1$ edges with the highest weights  from the minimum spanning tree .
\item Return the resulting connected components as clusters.
\end{enumerate}  
\end{enumerate}
\end{algo}

Intuitively, it is not hard to see why Algorithm~\ref{algo:RightSMST} may be inferior to Algorithm~\ref{algo:RightSC} in practice.  Algorithm~\ref{algo:RightSMST} is designed with the expectation that the between-cluster distances are always larger than within-cluster distances. This works when the signal is strong enough, but with weaker signal the minimum between-cluster distance and the maximum within-cluster distance can be unstable.  In contrast, $K$-means looks at the average behavior of observations within the same cluster and thus can be a lot more stable.  

\subsection{The nomination stochastic block model (NSBM)}\label{secsec:NSBM}

The general model \eqref{eq:generic-model} allows for a common algorithm for community detection. However, in many situations, in addition to community labels, one may also be interested in learning the nomination pattern $F_i$.  Making the so far unspecified functions $F_i$ both estimable and interpretable requires further modeling.   Next, we introduce a specific nomination model under the framework of \eqref{eq:generic-model}, which we believe achieves a good balance between generality and interpretability.

In addition to the previously defined $\V{c}$ and $B$, we introduce two new node-specific parameters, given by $n$-dimensional vectors $\V{\lambda} = (\lambda_i)$ and $\V{\theta} = (\theta_i)$.   The proposed nomination stochastic block model (NSBM) assumes
\begin{equation}\label{eq:NSBM-F}
f_i(P_{ij}) = \theta_i P_{ij}^{\lambda_i-1}, i \in [n]. 
\end{equation}
This model includes the special case of $F_i = \rho_i$.   More importantly, the parameters $\V{\lambda}$ and $\V{\theta}$ are easily interpretable. We can think of the parameter $\theta_i$ as measuring the overall propensity of node $i$ to nominate links, and of $\lambda_i$ as a measure of their preference for nominating links from their own or closely connected communities; both these factors may affect data collection in surveys \citep{harris2009national}. For example, suppose that $B_{kk} > B_{kj}, k\ne j$ so that the SBM is assortative. In this case,  $\lambda_i > 1$ indicates that the node $i$ tends to nominate connections from its own community while $\lambda_i <1$ indicates a preference for nominating connections from a different community. 

We can again write out the marginal distribution of $\tilde{A}$, given by 
\begin{equation}\label{eq:NSBM}
\p(\tilde{A}_{ij} = 1) = \theta_i B_{c_ic_j}^{\lambda_i}.
\end{equation}

As with any model involving products of parameters, we need additional constraints for identifiablity, as well as for ensuring the probability is always between 0 and 1.
We require that $\tilde{P}$ has no rows consisting entirely of zeros, and thus we require $\theta_i>0$ for all $i$ and that each row of $B$ contains at least one positive entry.     In addition, if $B_{kl}=0$ for all $l\ne k$, then community $k$ will not send edges to other communities and it will be impossible to identify $\lambda_i$'s for nodes in community $k$.    On the other hand, if $B_{kl} = B_{kk}$ for all $l$, then community $k$ is not identifiable.   We also need scaling constraints on  $B$ and $\V{\lambda}$ to avoid invariance multiplying by a constant.    Putting all these together leads to the following identifiability conditions.

\begin{prop}\label{prop:identify}
Model \eqref{eq:NSBM} is identifiable if the following conditions hold: 
\begin{enumerate}
\item $B_{kk}=1$ for all $k = 1, \dots, K$.  
\item For each $k$, there exists at least one $l \ne k$ such that $B_{kl} \ne B_{kk}$ and $B_{kl} \ne 0$. 
\item $\theta_i>0$ for all $i = 1, \dots, n$.
\item $\frac{1}{n_k} \sum_{i \in G_k}\lambda_i =1$ for all $k = 1, \dots, K$, where $n_k = |G_k|$.
\end{enumerate}
\end{prop}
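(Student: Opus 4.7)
The plan is to identify the parameters sequentially in the order $\V{\theta} \to \V{c} \to B \to \V{\lambda}$, using each of the four conditions at its appropriate step. Suppose two parameter sets $(\V{c}, B, \V{\theta}, \V{\lambda})$ and $(\V{c}', B', \V{\theta}', \V{\lambda}')$ both produce the same matrix $\tilde{P}$ with $\tilde{P}_{ij} = \theta_i B_{c_ic_j}^{\lambda_i}$; I aim to show these sets agree up to a permutation of community labels.

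First I would recover $\V{\theta}$ from the diagonal. Condition 1 gives $B_{c_ic_i}=1$, hence $\tilde{P}_{ii} = \theta_i \cdot 1^{\lambda_i} = \theta_i$, and condition 3 ensures $\theta_i > 0$ so normalization is well-defined. Setting $Q_{ij} = \tilde{P}_{ij}/\theta_i = B_{c_ic_j}^{\lambda_i}$, I would recover the community labels by arguing that $c_j = c_{j'}$ if and only if $Q_{\cdot j} = Q_{\cdot j'}$. The forward direction is immediate; for the converse, $B_{c_ic_j}^{\lambda_i} = B_{c_ic_{j'}}^{\lambda_i}$ for every $i$, combined with condition 4 (which forces some $\lambda_i \ne 0$ in each community), yields $B_{k c_j} = B_{k c_{j'}}$ for every $k$, so the $c_j$-th and $c_{j'}$-th columns of $B$ coincide and must correspond to the same label up to a global relabeling.

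With $\V{c}$ and $\V{\theta}$ identified, I would recover $B$ and $\V{\lambda}$ row by row. For each $k$, condition 2 supplies some $l \ne k$ with $B_{kl} \in (0,1)$, so $\log B_{kl} < 0$ and logarithms behave. For any $i \in G_k$ and $j \in G_l$, $\log Q_{ij} = \lambda_i \log B_{kl}$; averaging over $i \in G_k$ and invoking condition 4 gives
\begin{equation*}
\log B_{kl} = \frac{1}{n_k} \sum_{i \in G_k} \log Q_{ij},
\end{equation*}
which pins down $B_{kl}$; then $\lambda_i = \log Q_{ij} / \log B_{kl}$ is determined for every $i \in G_k$, and finally $B_{km} = Q_{ij}^{1/\lambda_i}$ for the remaining $m$, with $B_{km} = 0$ read off directly from $Q_{ij} = 0$.

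The main obstacle is step 2. Conditions 1--4 do not immediately rule out two columns of $B$ being identical, so I would need to argue that if columns $k$ and $k'$ coincided, combining condition 1 ($B_{kk'} = B_{kk} = 1$) with condition 2 applied to row $k'$ would force a row profile inconsistent with such a coincidence, or else invoke the implicit non-degeneracy that the $K$ communities are genuinely distinct as part of the SBM specification. The subsidiary difficulty is handling $\lambda_i = 0$ nodes or $B_{kl} = 0$ entries in the log-average, but restricting attention to the non-trivial index $l$ furnished by condition 2 keeps all quantities in the averaging identity finite and strictly negative.
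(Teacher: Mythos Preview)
Your overall strategy for recovering $B$ and $\V{\lambda}$ is essentially the same as the paper's: exploit $\log Q_{ij}=\lambda_i\log B_{c_ic_j}$, average over $i\in G_k$ using condition~4 to isolate $\log B_{kl}$, then divide to get $\lambda_i$. The paper phrases this slightly differently---it first locates a reference node $i_0$ with $\lambda_{i_0}\ne 0$ and a reference column $l_0$ with $B_{kl_0}\notin\{0,1\}$, identifies all $\lambda_i$'s up to a common scale via $\mu_{ik}-\mu_{il_0}=\lambda_i(b_k-b_{l_0})$, fixes the scale by condition~4, and only then reads off $B$---but the underlying identities are the same.

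The substantive difference is in how much you attempt to prove. The paper's proof opens by \emph{taking the community labels $\V{c}$ as given}: its first sentence is ``given the probability matrix $\tilde{P}$ \dots\ and the community labels $\V{c}$, all parameters are uniquely determined.'' Your steps~1--2 go beyond this, trying to recover $\V{\theta}$ and then $\V{c}$ from $\tilde{P}$ alone, and both steps have problems. Step~1 reads $\theta_i$ off the diagonal $\tilde{P}_{ii}$, but network models of this kind typically exclude self-loops, so $\tilde{P}_{ii}$ is not part of the observed parameterization; the paper instead uses $\theta_i=\tilde{P}_{ij}$ for any $j\in G_{c_i}$, which is available precisely because $\V{c}$ is assumed known. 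Step~2, as you already flag, fails unless $B$ has distinct columns, and conditions~1--4 do not guarantee this (e.g.\ $K=3$ with $B_{13}=B_{12}\in(0,1)$, $B_{23}=B_{22}=1$, $B_{31}\ne 1$ satisfies all four conditions yet columns~2 and~3 can coincide). The fix is simply to follow the paper and treat $\V{c}$ as part of the given data; once you do that, your argument for $\theta_i$, $B$, and $\lambda_i$ goes through and matches the paper's.
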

Compared with the general model \eqref{eq:generic-model}, the NSBM offers the possibility of fitting an interpretable nomination mechanism model and learning something about each node's preferences.  We will illustrate this in our data analysis in Section \ref{sec:BSchool}, where the estimated nomination parameters offer a lot of insight into the underlying process.  The price we pay for interpretability, as usual, is less flexibility, since we now have more explicit model assumptions. For example, the requirement $\theta_i>0$ excludes the egocentric sampling mechanism, whereas the general model includes it.

\subsection{Parameter estimation under the NSBM}\label{secsec:NSBM-parameter-est}
Given community labels $\V{c}$, the other parameters in model \eqref{eq:NSBM} can estimated by the method of moments under the identifiability constraints of Proposition~\ref{prop:identify}.   Specifically, if $B_{kl} >0$, then for any arbitrary $i \in G_k$ and $j\in G_l$, we have
\begin{equation}
\label{eq:MomentRelation}
\log(\tilde{P}_{ij}) = \mu_{il} = \log(\theta_i) + \lambda_i\log(B_{kl}).
\end{equation}
Combining the conditions in Proposition~\ref{prop:identify} and \eqref{eq:MomentRelation}, we obtain the following identities:
\begin{eqnarray}
\theta_i & = & \tilde{P}_{ij} \text{ for any }  j \in G_{k} \, , \label{eq:id1}  \\
B_{kl} & = & \exp(- \frac{1}{n_k}\sum_{i \in G_k}(\mu_{ik} -\mu_{il})) \, , \\
\lambda_i & = & \frac{\mu_{ik}-\mu_{il}}{\sum_{j\in G_k}(\mu_{jk}-\mu_{jl})/n_k} \, , \text{ if } B_{kk} \ne B_{kl} \, .  \label{eq:id3}
\end{eqnarray}

Therefore, we can use the method of moments to estimate $\mu_{il}$ by  
\begin{equation}\label{eq:MomentMatching}
  \exp(\hat \mu_{il}) = \frac{1}{n_{l}}  \e \sum_{j \in G_{l}}\tilde{A}_{ij}
\end{equation}
and plug it \eqref{eq:id1}-\eqref{eq:id3} to estimate other parameters, with some modifications to handle boundary cases.   This is summarized in the following algorithm. 

   \begin{algo}[Parameter estimation for the NSBM by the method of moments]\label{algo:PE}
Given the adjacency matrix $\tilde{A}$ and community labels $\V{c}$, for $k = 1,2, \cdots, K$ (obtained by, for example, right SC):   

\begin{enumerate}
\item Set $T_{il} = \frac{\sum_{j \in G_l\tilde{A}_{ij}}}{n_{l}}$ for each $i \in G_k$ and $1 \le l \le K$. 
\item Estimate $\theta_i$ for each $i \in G_k$ by
\begin{equation}\label{eq:theta-estimate}
\hat{\theta}_i = T_{ik}\vee \frac{1}{n_k} \ .
\end{equation}

\item Find the set $\Psi_{k} = \{l: 1\le l \le K,  T_{il} > 0~~ \forall i \in G_k\}$. Set $\hat{B}_{kl} = 0$ for each $l \notin \Psi_k$.
\item \begin{enumerate}
\item Define $Y_{il} = \log(T_{il}\vee\frac{1}{n_l})$ for each $i \in G_k$, where the $\frac{1}{n_l}$ is used to avoid overflow for the pathological case of $T_{il}=0$ for some $i \in G_k$.
\item Estimate $\lambda_i$ for each $i\in G_k$ by
\begin{equation}\label{eq:lambda-esitmate}
\hat{\lambda}_i = \frac{\sum_{l \in \Psi_k/\{k\}}(Y_{ik}-Y_{il})}{\sum_{l \in \Psi_k/\{k\}}\sum_{j\in G_k}(Y_{jk}-Y_{jr})/n_k}
\end{equation}
\item Estimate $B_{kl}$ for each for each $l \in \Psi_k/\{k\}$ by 
\begin{equation}\label{eq:B-estimate}
\hat{B}_{kl} = \exp(- \frac{1}{n_k}\sum_{i \in G_k}(Y_{ik} -Y_{il})).
\end{equation}
\end{enumerate}
\end{enumerate}  

\end{algo}

\begin{rem} In the setting of unweighted networks, the method of moments estimators coincide with the MLE, as $T_{il}$ is the MLE of $\exp(\mu_{il})$. However, in the general weighted setting to be introduced in Section~\ref{sec:extent}, the MLE may be hard to obtain. In contrast, the method of moments remains a computationally feasible option as it only requires conditions on first-order moments. 
\end{rem}

\subsection{Extension to weighted networks}\label{sec:extent}


Networks with weighted edges are frequently encountered in practice, and even though methods for binary networks can be applied to weighted networked after thresholding edge weights, this approach results in substantial loss of information and can be sensitive to the choice of threshold.  Fortunately, we can extend the NSBM directly to weighted networks, without applying a thresholding step.


Given community labels $\V{c}$, assume each edge weight $\tilde{A}_{ij}$ is independently generated from a probability distribution satisfying 
\begin{equation}\label{eq:weightedPowerModel}
\e_{\pi}\tilde{A}_{ij} = \theta_iB_{c_ic_j}^{\lambda_i}.
\end{equation}

The specific choice of distribution class will depend on the problem at hand. For instance, the Poisson distribution has often been used to model network edge weights and is a popular choice for non-negative integer weights \citep{karrer2011stochastic}. In this case,  we can interpret the true $A_{ij}$ as the number of interactions directed  from node $i$ to node $j$ and model it as generated from $\text{Poisson}(B_{c_ic_j})$.   Then $\tilde{A}_{ij}$ can be interpreted as the number of interactions chosen randomly by node $i$ from $\text{Binomial}(A_{ij}, \theta_iB_{c_ic_j}^{\lambda_i-1})$ to report as their relationship with node $j$.  Again, we are assuming that only true interactions are reported, just like in the unweighted case, so that $\tilde A_{ij} \le A_{ij}$.

Since the model is specified through the expectation, we can still apply the right spectral clustering and method of moments algorithms, and similar theoretical guarantees can be obtained as long as the generating distributions of the edge weights are  not heavy-tailed. Note that we have only specified the mean structure of $\tilde{A}$ in  \eqref{eq:weightedPowerModel}, so if the edge distribution depends on other parameters, additional constraints and modifications may be necessary.

\section{Theoretical properties}\label{sec:theory}
  
Here we investigate asymptotic properties of community detection under the general model \eqref{eq:generic-model} and parameter estimation under the NSBM. We always assume the $B$ matrix is full-rank, and the number of communities $K$ is known and fixed. In practice,  $K$ can be estimated by many data-driven methods such as the edge cross-validation of \cite{li2016network} by taking advantage of the property that the rank of $\tilde{P}$ equals the number of communities.  
    
    \subsection{Consistency of community detection}\label{secsec:CommunityTheory}
 
We first introduce an additional assumption we need for theoretical developments, which is that none of the communities vanish relatively to the size of others as $n$ grows.

    \begin{ass}\label{A2} 
  Assume that $n_{\min}:= \min_k n_k \ge \kappa' n$ for some constant $\kappa' >0$. Also define $n_{\max} = \max_{k}n_k$.
  \end{ass}


  \begin{thm}[Consistency of the Right SC algorithm]\label{thm:CommunityConsistency}
Assume the network $\tilde{A}$ is generated from model \eqref{eq:generic-model}.  Let  $\hat{\V{c}}$ be the output of the Right SC algorithm with a $(1+\epsilon)$-optimal solution, $\sigma_K(\tilde{P}) $ be  the $K$th largest singular value of $\tilde{P}$, and $\norm{\tilde{P}}_{\infty} = \max_{ij} \tilde{P}_{ij}$. Assume \ref{A2} holds, and $n\norm{\tilde{P}}_{\infty} \ge C_0 \log{n}$ for some constant $C_0$. If there exists a constant $C_1$ depending on $C_0$, $\epsilon$ and $\kappa'$, such that
\begin{equation}\label{eq:RightSC-basic-signal}
  \frac{Kn\norm{\tilde{P}}_{\infty} }{\sigma_K(\tilde{P})^2} \le \frac{1}{C_1},
  \end{equation}
   then with probability at least $1-n^{-1}$, there exists a permutation of labels $\hat{\V{c}}$, such that 
   $$\sum_k \frac{|G_k \setminus \hat{G}_k|}{n_k} \le C_1\frac{Kn\norm{\tilde{P}}_{\infty} }{\sigma_K(\tilde{P})^2}.$$
     \end{thm}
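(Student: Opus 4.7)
The argument follows the now-standard ``concentration + Davis--Kahan + $K$-means'' pipeline for spectral clustering under block models, with the distinctive feature that Proposition~\ref{lem:eigen} lets us work with the right singular vectors of $\tilde{A}$ in place of the usual (left) eigenvectors. I would execute three reductions and then combine them.

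\emph{Step 1: Concentration of $\tilde{A}$ around $\tilde{P}$.} The entries $\tilde{A}_{ij}$ are independent Bernoulli variables with means $\tilde{P}_{ij}\le \|\tilde{P}\|_\infty$. Writing $\tilde{A}-\tilde{P}$ as a sum of independent zero-mean matrices and applying a rectangular matrix Bernstein inequality (or a Chin--Rao--Vu style bound, as used in the proof of concentration for the directed SBM), the assumption $n\|\tilde{P}\|_\infty \ge C_0 \log n$ yields
\begin{equation*}
\|\tilde{A}-\tilde{P}\|_{\mathrm{op}} \le C\sqrt{n\|\tilde{P}\|_\infty}
\end{equation*}
with probability at least $1-n^{-1}$, for a constant $C$ depending only on $C_0$.

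\emph{Step 2: Perturbation of the right singular subspace.} Since $\tilde{P}$ has rank $K$ (because $\tilde{V}=ZX$ with $X\in\mathbb{R}^{K\times K}$ from Proposition~\ref{lem:eigen}), Wedin's $\sin\Theta$ theorem gives an orthogonal matrix $O\in\mathbb{R}^{K\times K}$ such that
\begin{equation*}
\|\widehat{V}O-\tilde{V}\|_F^2 \;\le\; \frac{C\,K\,\|\tilde{A}-\tilde{P}\|_{\mathrm{op}}^2}{\sigma_K(\tilde{P})^2}\;\le\; \frac{C'\,K\,n\|\tilde{P}\|_\infty}{\sigma_K(\tilde{P})^2}.
\end{equation*}
The quantitative condition \eqref{eq:RightSC-basic-signal} is exactly what forces the right-hand side to be smaller than a prescribed constant, which will be needed in Step~3 to guarantee recoverability.

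\emph{Step 3: $K$-means rounding on $\widehat{V}$.} By Proposition~\ref{lem:eigen}, the rows of $\tilde{V}$ take only $K$ distinct values (the rows of $X$), and the inter-centroid separation satisfies $\delta_k^2 := \min_{l\ne k}\|X_{k\cdot}-X_{l\cdot}\|_2^2 = \min_{l\ne k}(n_k^{-1}+n_l^{-1}) \ge n_k^{-1}$. I will invoke the standard approximate-$K$-means analysis of Lei and Rinaldo (2015, Lemma 5.3): for a $(1+\epsilon)$-optimal solution there is a set $S_k\subseteq G_k$ of ``misclustered'' nodes satisfying
\begin{equation*}
\sum_{k=1}^K |S_k|\,\delta_k^2 \;\le\; 8(2+\epsilon)\,\|\widehat{V}O-\tilde{V}\|_F^2,
\end{equation*}
and, provided $\|\widehat{V}O-\tilde{V}\|_F^2$ is smaller than a constant multiple of $\min_k n_k \delta_k^2$, the complement of $\bigcup_k S_k$ is clustered correctly after the appropriate label permutation. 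The scaling \eqref{eq:RightSC-basic-signal} (together with \ref{A2}) ensures this smallness condition; after choosing $C_1$ large enough, we obtain $|G_k\setminus \hat{G}_k|\le |S_k|$ for the right permutation.

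\emph{Combining.} Using $\delta_k^2\ge 1/n_k$, each term satisfies $|S_k|/n_k \le |S_k|\delta_k^2$, so summing and plugging in the bounds from Steps~1--2 gives
\begin{equation*}
\sum_{k=1}^K \frac{|G_k\setminus \hat{G}_k|}{n_k} \;\le\; \sum_k |S_k|\delta_k^2 \;\le\; C\,\|\widehat{V}O-\tilde{V}\|_F^2 \;\le\; C_1\,\frac{K\,n\|\tilde{P}\|_\infty}{\sigma_K(\tilde{P})^2},
\end{equation*}
which is the desired bound. The step I expect to be the most delicate is the $K$-means analysis in Step~3: one has to align $\widehat{V}$ with $\tilde{V}$ by an orthogonal matrix arising from Wedin, compare the candidate centers produced by the $(1+\epsilon)$-approximate solver to the rows of $X$, and carefully use the separation $\delta_k^2\ge 1/n_k$ provided by Proposition~\ref{lem:eigen} to convert a row-level Frobenius bound into a permutation-aware miscluster bound. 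The non-symmetry of $\tilde{P}$ (which blocks the cleaner Davis--Kahan route used for undirected SBMs) is handled automatically by working with singular subspaces and invoking Wedin rather than Davis--Kahan.
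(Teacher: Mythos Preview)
Your proposal is correct and follows essentially the same route as the paper: concentration of $\tilde{A}$ around $\tilde{P}$ (the paper cites the Le--Levina--Vershynin bound, equivalent to your matrix Bernstein step), a Davis--Kahan/Wedin-type Frobenius bound on the right singular subspace (the paper uses Lemma~7 of Chen et al., 2014, which is the same conclusion), Proposition~\ref{lem:eigen} for the inter-centroid separation, and Lei--Rinaldo's Lemma~5.3 for the $K$-means rounding. Your combining step via the per-community separations $\delta_k^2\ge 1/n_k$ is in fact slightly cleaner than the paper's, which uses a single global $\delta^2=2/n_{\max}$ and then divides by $n_{\min}$, invoking \ref{A2} to control $n_{\max}/n_{\min}$; your version gets $|S_k|/n_k\le |S_k|\delta_k^2$ directly without that ratio.
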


Theorem~\ref{thm:CommunityConsistency} implies that under the general model \eqref{eq:generic-model}, the Right SC only misclusters a proportion of nodes upper bounded by the quantity $\frac{Kn\norm{\tilde{P}}_{\infty} }{\sigma_K(\tilde{P})^2} \to 0$. Unfortunately, due to the generality of model \eqref{eq:generic-model},  the bound of Theorem~\ref{thm:CommunityConsistency} depends on $\sigma_K(\tilde{P})$, a quantity without an obvious interpretation.   Under the NSBM, this result can be simplified into a more interpretable form.

 \begin{ass}[Simplified Parameterization of the NSBM]\label{A3}
   Assume matrix $B$ is a fixed matrix. Write $\theta_i = \rho_n\bar{\theta}_i$ where $\bar{\theta}_i$'s are independently sampled from a fixed  discrete   distribution $g_{\theta}$ on $m_1$ different positive values.
   Assume the values of $\lambda_i$'s are independently sampled from a fixed multinoulli distribution $g_{\lambda}$ with mean value 1 on $m_2$ different  values and then rescaled to satisfy the identifiability constraints in Proposition~\ref{prop:identify}.    
  \end{ass}
 
This parameterization ensures that $\rho_n$ is the only quantity varying with $n$.  Setting $\theta_i = \rho_n\bar{\theta}_i$ allows us to parameterize the edge density of the network by a single parameter $\rho_n$. In fact, under the NSBM model with identifiability constraints of Proposition~\ref{prop:identify}, \ref{A2} and \ref{A3}, it is easy to see that $n\rho_n$ is the order of the expected average node degree.  
Then we have the following corollary of Theorem~\ref{thm:CommunityConsistency}. 
 
 \begin{coro}\label{coro:CommunityConsistency}
Assume the network is generated from the NSBM \eqref{eq:NSBM}. Let $\hat{\V{c}}$ be the clustering labels found by the Right SC algorithm with $(1+\epsilon)$-optimal solution. If assumptions of Proposition \ref{prop:identify}, \ref{A2}, and \ref{A3} hold and $n \rho_n \ge C_0 \log n$ for some constant $C_0$, then for sufficiently large $n$, with probability at least $1-2n^{-1}$, there exists a permutation of labels $\hat{\V{c}}$, such that 
\begin{equation}\label{eq:community}
\sum_k \frac{|G_k \setminus \hat{G}_k|}{n_k} \le C'\frac{1}{n\rho_n}
\end{equation}
   for some constant $C'$ depending on $C_0, \kappa',\epsilon, \eta, K$ and the distributions of $\bar{\theta}_i$'s and $\lambda_i$'s.
 \end{coro}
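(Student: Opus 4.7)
The plan is to deduce the corollary from Theorem~\ref{thm:CommunityConsistency} by specializing the two quantities $\norm{\tilde P}_\infty$ and $\sigma_K(\tilde P)$ appearing in its bound to the NSBM parameterization of Assumption~\ref{A3}. Since $\tilde P_{ij} = \rho_n \bar\theta_i B_{c_i c_j}^{\lambda_i}$ with $\bar\theta_i, \lambda_i$ supported on fixed finite sets and $B$ a fixed matrix, one immediately has $\norm{\tilde P}_\infty \le M_1 \rho_n$ for a constant $M_1$ depending only on $g_\theta, g_\lambda, B$. In particular the assumption $n\rho_n \ge C_0 \log n$ implies the hypothesis $n\norm{\tilde P}_\infty \gtrsim \log n$ of Theorem~\ref{thm:CommunityConsistency} up to rescaling $C_0$. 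The whole corollary will then follow by plugging both bounds into the conclusion of Theorem~\ref{thm:CommunityConsistency} together with a union bound, so the main task is the matching lower bound $\sigma_K(\tilde P) \ge c \rho_n \sqrt{n}$ with probability at least $1-n^{-1}$.

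To obtain this, I factor $\tilde P = \rho_n W Z^T$, where $Z \in \{0,1\}^{n\times K}$ is the membership matrix and $W \in \bR^{n\times K}$ has entries $W_{ik} = \bar\theta_i B_{c_i k}^{\lambda_i}$. Because $Z^T Z = \diag(n_1,\ldots,n_K)$ and \ref{A2} yields $n_{\min} \ge \kappa' n$, a min-max argument on the nonzero eigenvalues of $\tilde P \tilde P^T = \rho_n^2 W(Z^TZ)W^T$ gives $\sigma_K(\tilde P)^2 \ge n_{\min}\rho_n^2\,\lambda_K(W^T W) \ge \kappa' n\rho_n^2\,\lambda_K(W^T W)$. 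It therefore suffices to show $\lambda_K(W^T W/n)$ is bounded below by a positive constant w.h.p. The $(l,m)$ entry of $W^T W/n$ equals $\sum_k (n_k/n)\,n_k^{-1}\sum_{i\in G_k}\bar\theta_i^2 B_{kl}^{\lambda_i} B_{km}^{\lambda_i}$, a weighted average of i.i.d.\ uniformly bounded block sums (the identifiability rescaling of $\lambda_i$ contributes a multiplicative factor $1 + O_p(n^{-1/2})$ that is absorbed into a lower-order remainder). Hoeffding's inequality with a union bound over $K^2$ entries yields $\norm{W^T W/n - \Sigma_n}_{\mathrm{op}} = O(\sqrt{\log n/n})$ with probability at least $1-n^{-1}$, where the deterministic limit is
\[
\Sigma_n = \e[\bar\theta^2]\,\e_\lambda\!\left[(B^{(\lambda)})^T D_{\pi_n} B^{(\lambda)}\right],\quad D_{\pi_n} = \diag(n_k/n),\quad B^{(\lambda)}_{kl} = B_{kl}^\lambda.
\]

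The main obstacle is establishing a uniform positive lower bound $\lambda_K(\Sigma_n) \ge c_0 > 0$. $\Sigma_n$ is a positive mixture of PSD matrices $(B^{(\lambda_j^*)})^T D_{\pi_n} B^{(\lambda_j^*)}$ indexed by the atoms $\lambda_j^*$ of $g_\lambda$, so a null vector $x$ of $\Sigma_n$ must lie in the kernel of $B^{(\lambda_j^*)}$ for every $j$. The map $\lambda \mapsto \det B^{(\lambda)}$ is analytic (on the range where $B^{(\lambda)}$ is defined) and takes the nonzero value $\det B$ at $\lambda = 1$ by the full-rank assumption on $B$; it is therefore not identically zero and has only isolated zeros, so the finitely-supported $g_\lambda$ of Assumption~\ref{A3} (being nondegenerate and avoiding these isolated zeros) assigns positive mass to at least one $\lambda_j^*$ for which $B^{(\lambda_j^*)}$ is invertible, forcing $x = 0$. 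Hence $\lambda_K(\Sigma_n) \ge c_0$ for a constant $c_0$ depending only on $B, g_\theta, g_\lambda, \kappa'$, and Weyl's inequality combined with the concentration bound above yields $\lambda_K(W^T W/n) \ge c_0/2$ for all sufficiently large $n$. Chaining everything, $\sigma_K(\tilde P)^2 \ge (\kappa' c_0/2)\rho_n^2 n^2$, so substitution into Theorem~\ref{thm:CommunityConsistency} gives
\[
\sum_k \frac{|G_k \setminus \hat{G}_k|}{n_k} \;\le\; C_1\,\frac{Kn\norm{\tilde P}_\infty}{\sigma_K(\tilde P)^2} \;\le\; \frac{C'}{n\rho_n},
\]
with \eqref{eq:RightSC-basic-signal} automatically satisfied for large $n$ because the right-hand side vanishes under $n\rho_n \ge C_0 \log n$. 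The stated probability $1-2n^{-1}$ is the union bound of the failure event of Theorem~\ref{thm:CommunityConsistency} with the concentration event for $W^T W/n$.
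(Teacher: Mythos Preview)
Your argument is correct and reaches the same conclusion as the paper, but the route to the singular-value lower bound is genuinely different. The paper exploits the discreteness in \ref{A3} directly: since $\bar\theta_i$ and $\lambda_i$ take only finitely many values, the $n\times K$ matrix $M$ with rows $(\bar\theta_i B_{c_i,1}^{\lambda_i},\ldots,\bar\theta_i B_{c_i,K}^{\lambda_i})$ has at most $m_1 m_2 K$ distinct rows, so one may write $F = \rho_n \tilde Z \tilde M$ with a second ``membership'' matrix $\tilde Z$ and a fixed $m_1m_2K\times K$ matrix $\tilde M$, obtaining $\sigma_K(\tilde P) \ge \rho_n\,\sigma_K(\tilde M)\min_{i,j,k}\sqrt{\tilde n_{ijk}}\sqrt{n_{\min}}$; Hoeffding on the counts $\tilde n_{ijk}$ then gives $\sigma_K(\tilde P)\gtrsim n\rho_n$. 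Your approach instead forms the $K\times K$ Gram matrix $W^TW/n$, shows entrywise concentration to a deterministic $\Sigma_n$, and argues $\lambda_K(\Sigma_n)>0$. The paper's factorization is slightly cleaner under the stated discrete assumption, while your Gram-matrix route would extend more readily to continuous $g_\theta,g_\lambda$.

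Two small remarks. First, your sentence ``the main task is the matching lower bound $\sigma_K(\tilde P)\ge c\rho_n\sqrt n$'' is a slip; you correctly obtain $\sigma_K(\tilde P)^2 \ge (\kappa' c_0/2)\rho_n^2 n^2$ later, i.e.\ order $n\rho_n$, which is what the final substitution needs. Second, your analyticity argument for $\lambda_K(\Sigma_n)>0$ quietly inserts the hypothesis that at least one atom of $g_\lambda$ is not a zero of $\lambda\mapsto\det B^{(\lambda)}$. This is exactly the same implicit requirement the paper makes when it sets $\lambda=\sigma_K(\tilde M)$ and treats it as a positive constant without comment (a null vector of $\tilde M^T\tilde M$ must lie in $\bigcap_j\ker B^{(\lambda_j^*)}$, the same intersection you identify). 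So this is not a gap relative to the paper, but you should be aware that it is an extra mild condition on $(B,g_\lambda)$ not spelled out in \ref{A3}.
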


 The Corollary~\ref{coro:CommunityConsistency} states that as long as the expected average degree of the network $n\rho_n $ grows at least in the order of $\log{n}$, the mis-clustered proportion is bounded by the order of $1/n\rho_n$.

  Next, we introduce the consistency of the Right SMST (Algorithm~\ref{algo:RightSMST}). The strong consistency can be obtained by using the recently $\ell_\infty$ perturbation theory from \cite{lei2019unified}. 
  
    \begin{thm}[Consistency of the Right SMST algorithm]\label{thm:CommunityConsistency-mst}
Assume the network $\tilde{A}$ is generated from the general model \eqref{eq:generic-model}.  Let  $\hat{\V{c}}$ be the output of Algorithm~\ref{algo:RightSMST}, and $\norm{\tilde{P}}_{\infty} = \max_{ij} \tilde{P}_{ij}$. Assume \ref{A2} holds,  $n\norm{\tilde{P}}_{\infty} \ge C_0 \log{n}$,  and
 \begin{equation}\label{eq:generic-basic-signal}
   \sigma_{K}(\tilde{P}) \ge C_1 n\norm{\tilde{P}}_{\infty}
  \end{equation}
   for some constants $C_0, C_1>0$.  If the following condition \eqref{eq:generic-strong-signal} is true
   \begin{equation}\label{eq:generic-strong-signal}
   \sqrt{\frac{\log{n}}{n\norm{\tilde{P}}_{\infty}}}\max\left( \sqrt{n}\norm{U}_{2,\infty}, \sqrt{n}\norm{V}_{2,\infty}\right) = o(1),
   \end{equation}
   for sufficiently large $n$, then there exists a permutation $\Psi: [K] \to [K]$ of community labels such that 
   $$\Psi(\hat{\V{c}}) = \V{c}$$
   with probability at least $1-n^{-1}$.
     \end{thm}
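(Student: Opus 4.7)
The plan is to combine the deterministic block structure of $\tilde{V}$ from Proposition~\ref{lem:eigen} with a row-wise ($\ell_{2,\infty}$) perturbation bound on $\widehat{V}-\tilde{V}$, and then argue that the perturbation is strictly smaller than the minimum gap between distinct population rows of $\tilde{V}$, so that the minimum spanning tree step in Algorithm~\ref{algo:RightSMST} cleanly separates the $K$ communities.

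By Proposition~\ref{lem:eigen}, $\tilde{V} = Z X$, so the rows of $\tilde{V}$ take exactly $K$ distinct values (one per community), two nodes in the same community share the same row, and for $i \in G_k, j \in G_l$ with $k \ne l$ we have $\norm{\tilde{V}_{i\cdot} - \tilde{V}_{j\cdot}}_2 = \sqrt{n_k^{-1} + n_l^{-1}}$. Under \ref{A2} this is at least $\sqrt{2/n}$. So the between-community gap is of order $n^{-1/2}$ and the within-community gap is exactly $0$; exact recovery via MST will follow as soon as the row-wise perturbation of $\tilde{V}$ is $o(n^{-1/2})$ uniformly over nodes.

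To obtain such a bound, I will invoke the unified $\ell_{2,\infty}$ singular-subspace perturbation theory of \cite{lei2019unified} applied to the noise matrix $E = \tilde{A} - \tilde{P}$, whose entries are independent, centered, and bounded in $[-1,1]$ with variance at most $\norm{\tilde{P}}_{\infty}$. Under the eigengap condition \eqref{eq:generic-basic-signal} and the density condition $n\norm{\tilde{P}}_{\infty} \ge C_0 \log n$, their result produces an orthogonal $O \in \bR^{K \times K}$ such that, with probability at least $1-n^{-1}$,
\[
\norm{\widehat{V} O - \tilde{V}}_{2,\infty} \;\lesssim\; \sqrt{\frac{\log n}{n\norm{\tilde{P}}_{\infty}}} \cdot \max\bigl(\norm{U}_{2,\infty},\norm{V}_{2,\infty}\bigr).
\]
Both incoherence factors appear because the $\ell_\infty$ leverage of $\widehat{V}$ propagates through both sides of the truncated SVD of $\tilde{A}$. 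The strong-signal hypothesis \eqref{eq:generic-strong-signal} is precisely the statement that the right-hand side above, multiplied by $\sqrt n$, is $o(1)$, so every row of $\widehat{V} O$ lies within $o(n^{-1/2})$ of its population counterpart. Since multiplication by the orthogonal matrix $O$ preserves pairwise Euclidean distances between rows, the bound transfers to $\widehat{V}$ itself: for sufficiently large $n$, every within-community row distance in $\widehat{V}$ is $o(n^{-1/2})$ while every between-community distance is at least $\sqrt{2/n} - o(n^{-1/2})$, so every within-community pair is strictly closer than every between-community pair.

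Once this uniform separation holds, the standard MST clustering argument (as in \cite{vu2018simple,lei2017generic}) finishes the proof: the minimum spanning tree of the $n$ embedded points $\widehat{V}_{i\cdot}$ must contain exactly $K-1$ between-cluster edges, and these are necessarily its $K-1$ heaviest tree edges, so cutting them yields connected components equal to $G_1,\dots,G_K$ up to a permutation $\Psi$. The main obstacle is the careful invocation of \cite{lei2019unified} in the middle step: one must verify the regularity conditions (independence, centering, variance proxy, and two-sided subspace incoherence, with the spectral gap quantified by $\sigma_K(\tilde{P})$ via \eqref{eq:generic-basic-signal}) and track constants so that the exceptional probability is controlled at $n^{-1}$ rather than some arbitrary $o(1)$. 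The remaining steps are essentially combinatorial and short.
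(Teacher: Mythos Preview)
Your proposal is correct and follows essentially the same route as the paper: the paper also combines Proposition~\ref{lem:eigen} with the $\ell_{2,\infty}$ perturbation bound of \cite{lei2019unified} (stated there as a separate lemma, a directed version obtained via the standard symmetric dilation), and then argues that the resulting $o(n^{-1/2})$ row-wise error forces every between-community distance in $\widehat V$ to exceed every within-community distance, so the MST cut recovers the partition exactly. Your handling of the orthogonal alignment $O$ is slightly more explicit than the paper's, but otherwise the argument is the same.
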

  
Compared to Theorem~\ref{thm:CommunityConsistency} for the Right SC, Theorem~\ref{thm:CommunityConsistency-mst} requires one additional condition \eqref{eq:generic-strong-signal} to achieve strong consistency.   Condition \eqref{eq:generic-strong-signal} reduces to \eqref{eq:generic-basic-signal} only  when $\max\left( \norm{U}_{2,\infty}, \norm{V}_{2,\infty}\right) = O(\frac{1}{\sqrt{n}})$, which also means that  $\tilde{P}$ has perfect incoherence \citep{candes2009exact,candes2010power}.  Again, we give a simplified form of Theorem~\ref{thm:CommunityConsistency-mst} in the special case of the NSBM with the parameterization assumed in \ref{A3}.

      \begin{coro}[Consistency of the Right SMST algorithm under NSBM]\label{coro:CommunityConsistency-mst-NSBM}
Assume the network $\tilde{A}$ is generated from the NSBM \eqref{eq:NSBM}.  Let  $\hat{\V{c}}$ be the output of Algorithm~\ref{algo:RightSMST}. If assumptions of Proposition \ref{prop:identify}, \ref{A2}, and \ref{A3} hold, and 
$$n\rho_n/\log{n} \to \infty, $$then for sufficiently large $n$, with probability at least $1-n^{-1}$, there exists a permutation $\pi: [K] \to [K]$ of labels $\hat{\V{c}}$ such that 
   $$\pi(\hat{\V{c}}) = \V{c}.$$
     \end{coro}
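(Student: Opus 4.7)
The strategy is to obtain Corollary~\ref{coro:CommunityConsistency-mst-NSBM} as a direct consequence of Theorem~\ref{thm:CommunityConsistency-mst} by verifying, under the NSBM together with the parameterization in \ref{A3}, the three quantitative hypotheses there: the degree-growth condition $n\|\tilde P\|_{\infty}\ge C_0\log n$, the basic signal condition \eqref{eq:generic-basic-signal}, and the incoherence condition \eqref{eq:generic-strong-signal}. Since $\tilde P_{ij}=\rho_n\bar\theta_i B_{c_ic_j}^{\lambda_i}$ while $\bar\theta_i$ and $\lambda_i$ come from fixed discrete distributions with finitely many bounded positive values and $B$ is a fixed matrix with $B_{kk}=1$, we have $\|\tilde P\|_{\infty}=\Theta(\rho_n)$, and the first condition follows immediately from $n\rho_n/\log n\to\infty$.

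For the signal condition, I would factor $\tilde P = M Z^{T}$, where $Z\in\{0,1\}^{n\times K}$ is the community membership matrix and $M\in\mathbb{R}^{n\times K}$ has $i$th row $\rho_n\bar\theta_i\bigl(B_{c_i1}^{\lambda_i},\ldots,B_{c_iK}^{\lambda_i}\bigr)$. Since $Z^{T}Z=\mathrm{diag}(n_k)$, Assumption~\ref{A2} gives $\sigma_K(Z)\ge\sqrt{\kappa' n}$, and the standard inequality $\sigma_K(\tilde P)\ge\sigma_K(M)\sigma_K(Z)$ reduces the task to showing $\sigma_K(M)\gtrsim\rho_n\sqrt n$. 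The matrix $n^{-1}\rho_n^{-2}M^{T}M$ has $(k,\ell)$ entry $n^{-1}\sum_{m=1}^{K}\sum_{i\in G_m}\bar\theta_i^{2}B_{mk}^{\lambda_i}B_{m\ell}^{\lambda_i}$, a sum of independent bounded terms, so Hoeffding's inequality together with \ref{A2} shows that with probability at least $1-n^{-1}$ it concentrates around the deterministic limit $\Sigma=\sum_{m}(n_m/n)\,\mathbb{E}[\bar\theta^{2}u_m u_m^{T}]$ with $u_m=(B_{m1}^{\lambda},\ldots,B_{mK}^{\lambda})^{T}$. Because $B$ is full rank and the distributions $g_\theta$, $g_\lambda$ are non-degenerate with $\bar\theta>0$, $\Sigma$ is positive definite, which yields $\sigma_K(M)^{2}\gtrsim n\rho_n^{2}$ and, combined with $\|\tilde P\|_{\infty}=O(\rho_n)$, verifies \eqref{eq:generic-basic-signal}.

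Finally, I would control the singular-vector incoherence. By Proposition~\ref{lem:eigen} the right singular vectors satisfy $\tilde V=ZX$ with $X^{T}\mathrm{diag}(n_k)X=I$, so $|X_{kj}|\le 1/\sqrt{n_k}$ and hence $\sqrt n\,\|\tilde V\|_{2,\infty}\le\sqrt{K/\kappa'}$. For the left singular vectors, the identity $\tilde P\tilde P^{T}=M\,\mathrm{diag}(n_k)M^{T}$ gives a factorization $\tilde U=M\,\mathrm{diag}(\sqrt{n_k})\,Q\,\tilde D^{-1}$ with $Q$ orthogonal, so
$$\|\tilde U_{i\cdot}\|^{2}\ \le\ \|M_{i\cdot}\,\mathrm{diag}(\sqrt{n_k})\|^{2}/\sigma_K(\tilde P)^{2}\ \le\ n_{\max}\cdot K\cdot C\rho_n^{2}\,/\,(C_{1}n\rho_n)^{2}\ =\ O(1/n),$$
which yields $\sqrt n\,\|\tilde U\|_{2,\infty}=O(1)$. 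Consequently the left-hand side of \eqref{eq:generic-strong-signal} is at most $\sqrt{\log n/(n\rho_n)}\cdot O(1)=o(1)$, and Theorem~\ref{thm:CommunityConsistency-mst} delivers exact recovery on the intersection of the two high-probability events. The main technical ingredient is the concentration and positive-definiteness argument that produces $\sigma_K(M)\gtrsim\rho_n\sqrt n$; once the factorization $\tilde P=MZ^{T}$ and the NSBM structure of Proposition~\ref{lem:eigen} are in hand, the remaining steps are essentially bookkeeping.
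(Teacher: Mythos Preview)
Your proposal is correct and follows the same overall plan as the paper: verify the three hypotheses of Theorem~\ref{thm:CommunityConsistency-mst} under the NSBM parameterization in \ref{A3}. The tactics for the two nontrivial estimates, however, differ from the paper's.

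For the singular-value lower bound \eqref{eq:generic-basic-signal}, the paper exploits the discreteness of \ref{A3} more explicitly than you do: since $(\bar\theta_i,\lambda_i,c_i)$ takes at most $m_1m_2K$ distinct values, one can write $M=\rho_n\tilde Z\tilde M$ with a fixed $\tilde M\in\mathbb R^{(m_1m_2K)\times K}$ and a $0$--$1$ membership matrix $\tilde Z$, so that $\sigma_K(\tilde P)\ge\rho_n\,\sigma_K(\tilde M)\sqrt{\min_{ijk}\tilde n_{ijk}}\sqrt{n_{\min}}$, and Hoeffding is applied only to the row-counts $\tilde n_{ijk}$. Your route via concentration of $n^{-1}\rho_n^{-2}M^{T}M$ toward a limit $\Sigma$ is equally valid under \ref{A3} (bounded i.i.d.\ summands, $K$ fixed), and your ``$\Sigma$ positive definite'' is exactly the same implicit assumption as the paper's $\sigma_K(\tilde M)>0$; neither source spells out why full rank of $B$ forces this.

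For the left-singular-vector incoherence, the paper again reuses the finite-row factorization: because $F=\rho_n(\tilde Z\tilde\Delta^{-1})(\tilde\Delta\tilde M)$ with $\tilde Z\tilde\Delta^{-1}$ orthonormal, the matrix $U$ itself has at most $m_1m_2K$ distinct rows, each repeated $\Omega(n)$ times, so $\|U\|_{2,\infty}=O(1/\sqrt n)$ immediately. Your argument via $\tilde U=M\,\mathrm{diag}(\sqrt{n_k})\,Q\,\tilde D^{-1}$ combined with the already-established bound $\sigma_K(\tilde P)\gtrsim n\rho_n$ is more generic---it needs only boundedness of the entries of $M$, not discreteness---at the small cost of coupling the incoherence verification to the high-probability singular-value event. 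Either path reaches the same conclusion; the paper's is a bit cleaner because a single factorization handles both estimates, while yours would continue to work under weaker (e.g.\ merely bounded-support) versions of \ref{A3}.
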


  \subsection{Parameter estimation consistency under the NSBM}\label{secsec:PETheory}
   
As usual, consistency of parameter estimation follows from consistency of labels.   In  this section, we show consistency for parameters $B, \V{\lambda}$ and $\V{\theta}$ estimated by Algorithm~\ref{algo:PE} under the NSBM \eqref{eq:NSBM} with parameterization \ref{A3}.  Since we have  already shown that the community labels can be recovered, in this section we treat the community labels as known for simplicity.

  \begin{thm}\label{thm:IdeaEstimationErrorBound}
Assume the network is generated from the NSBM \eqref{eq:NSBM}.	Let $\hat{\V{\theta}}, \hat{\V{\lambda}}$ and $\hat{B}$ be the estimators for $\V{\theta}, \V{\lambda}$ and $B$, respectively, obtained by Algorithm~\ref{algo:PE}. 
  Assume conditions of Proposition \ref{prop:identify}, \ref{A2} and \ref{A3} hold. Then there exists a constant $c_1$, depending on $\kappa'$ in \ref{A2}, $ B, g_{\lambda}$ in \ref{A3}, and $K$, such that if $\rho_n \ge  c_1\frac{\log^4{n}}{n}$,  for sufficiently large $n$ we have
   \begin{align*}
\max_i \frac{|\hat{\theta}_i-\theta_i|}{\theta_i} & \le \frac{1}{\log{n}} \, , \\
\max_{k,l} |\hat{B}_{kl} - B_{kl}| &\le \frac{1}{\log{n}} \, , \\
\max_{i} |\hat{\lambda}_{i} - \lambda_{i}| &\le \frac{1}{\log{n}}
\end{align*}
with probability at least $1 - n^{-1}$.
  \end{thm}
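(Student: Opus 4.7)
The core object to analyze is the empirical row-block mean
\[
T_{il} \;=\; \frac{1}{n_l}\sum_{j\in G_l}\tilde A_{ij},
\]
whose expectation equals $\e T_{il}=\theta_i B_{c_il}^{\lambda_i}$ when $B_{c_il}>0$. Since the community labels are treated as known and edges are independent Bernoulli trials, $n_l T_{il}$ is a sum of $n_l$ independent Bernoulli variables with mean $n_l\,\theta_i B_{c_il}^{\lambda_i}$. Under \ref{A3}, $\theta_i = \rho_n\bar\theta_i$ with $\bar\theta_i$ bounded above and bounded below by a positive constant, $B_{c_il}^{\lambda_i}$ is bounded above and below (away from $0$ whenever $B_{c_il}>0$) by constants depending only on $B$ and $g_\lambda$, and $n_l\asymp n$ by \ref{A2}. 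Hence $\e T_{il}\asymp \rho_n$ whenever $B_{c_il}>0$. A Bernstein (or multiplicative Chernoff) bound then yields, uniformly over all $(i,l)$ with $B_{c_il}>0$ and with probability at least $1-n^{-2}$ after a union bound over $nK$ pairs,
\[
\Big|\frac{T_{il}}{\e T_{il}}-1\Big|\;\le\; C\sqrt{\frac{\log n}{n\rho_n}}\;=:\;\delta_n,
\]
and simultaneously $T_{il}>0$. Under the assumption $\rho_n\ge c_1\log^4 n/n$, we have $\delta_n\le (\log n)^{-3/2}$, which will be comfortably smaller than the target $(\log n)^{-1}$.

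With this uniform relative-error bound in hand, I would then propagate it through the three estimators. First, for $\hat\theta_i=T_{ik}\vee n_k^{-1}$ (with $i\in G_k$), note $\theta_i\asymp\rho_n\gg n^{-1}$, so the truncation is inactive and $|\hat\theta_i-\theta_i|/\theta_i = |T_{ik}-\theta_i|/\theta_i\le\delta_n$. Second, the set $\Psi_k$ equals $\{l:B_{kl}>0\}$ on the same event, because $T_{il}>0$ for all $i\in G_k$ when $B_{kl}>0$ (by the concentration above), whereas $T_{il}=0$ identically when $B_{kl}=0$. Third, for the log-transformed quantities $Y_{il}=\log(T_{il}\vee n_l^{-1})$ with $l\in\Psi_k$, the local Lipschitz bound $|\log(1+x)|\le 2|x|$ for $|x|\le 1/2$ gives $|Y_{il}-\mu_{il}|\le 2\delta_n$ uniformly, where $\mu_{il}=\log\theta_i+\lambda_i\log B_{kl}$.

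Now $\hat B_{kl}$ and $\hat\lambda_i$ are explicit smooth functions of these $Y$'s. For $\hat B_{kl}=\exp\bigl(-n_k^{-1}\sum_{i\in G_k}(Y_{ik}-Y_{il})\bigr)$, the argument of the exponential differs from $\log B_{kl}$ by at most $4\delta_n$; since $B_{kl}$ is bounded, Lipschitzness of $\exp$ on a bounded set yields $|\hat B_{kl}-B_{kl}|\lesssim\delta_n$. For $\hat\lambda_i$, the ideal value of the numerator is $-\lambda_i\sum_{l\in\Psi_k\setminus\{k\}}\log B_{kl}$ and the ideal denominator is $-\sum_{l\in\Psi_k\setminus\{k\}}\log B_{kl}$ (using the identifiability constraint $n_k^{-1}\sum_{j\in G_k}\lambda_j=1$); both are bounded and the denominator is bounded away from zero by identifiability condition~2 of Proposition~\ref{prop:identify}. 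The perturbation in the numerator is $O(K\delta_n)$, while the denominator enjoys additional averaging over $n_k$ terms and its perturbation is likewise $O(\delta_n)$. A quotient-perturbation lemma then gives $|\hat\lambda_i-\lambda_i|\lesssim\delta_n$.

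The main obstacle is purely quantitative: driving the relative error $\delta_n=O(\sqrt{\log n/(n\rho_n)})$ below the target $(\log n)^{-1}$, which forces $n\rho_n\gtrsim\log^3 n$; the stated condition $\rho_n\ge c_1\log^4 n/n$ provides a safe margin that also absorbs the implicit constants arising from $K$, $B$, $g_\theta$, $g_\lambda$, and the Lipschitz constants of $\log$ and $\exp$ restricted to the relevant ranges. The remaining subtleties are the union bound establishing $\Psi_k$ correctly identifies the support of $B_{k\cdot}$, and the careful two-sided control ensuring $T_{il}$ is bounded away from zero on the good event so that the $\log$ is well approximated by its linearization.
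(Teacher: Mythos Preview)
Your proposal is correct and follows essentially the same route as the paper: Bernstein concentration for the block means $T_{il}$ to obtain uniform relative error control, then propagation through $\hat\theta_i$, $\hat B_{kl}$, and $\hat\lambda_i$ via Lipschitz bounds on $\log$ and $\exp$ together with a ratio-perturbation inequality for $\hat\lambda_i$. The only organizational difference is that you establish a single uniform bound $\delta_n=C\sqrt{\log n/(n\rho_n)}$ upfront via union bound, whereas the paper re-applies Bernstein with tailored thresholds $t$ at each step; both lead to the same $\log^4 n$ degree requirement (with the same slack you note).
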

  
  Theorem~\ref{thm:IdeaEstimationErrorBound} shows that the estimators are consistent even on the sparse networks, with average degree on the order of $\log^4{n}$. This requirement can be slightly improved at the cost of slightly worse probability bound or the upper bounds on the errors.

 \section{Numerical results on synthetic networks}\label{sec:sim}
 
In this section, we demonstrate the proposed methods for community detection and network modeling using simulation studies. We first show the importance of clustering based on the correct spectral information by comparing the performance of the Right SC with a few other spectral methods that do not consider the nature of the edge nomination mechanism.  We also show that the NSBM cannot be approximated well by a few standard community models.   We illustrate this on both unweighted networks and weighted networks with Poisson-distributed edge weights, since both our community detection algorithm and the parameter estimation algorithm by the method of moments apply to both settings.

 \subsection{Evaluation on binary networks}
 
For this set of experiments, networks are generated from the NSBM as follows: $n=1200$ nodes are randomly assigned to $K=3$ communities with equal probability. The matrix $B$ has all diagonal elements equal to 1 and all off-diagonal elements equal to $\beta$.  The parameters $\lambda_i$'s are generated independently with $\log(\lambda)$  sampled uniformly from the interval $(-t,t)$, and then rescaled to satisfy the constraint $\sum_{\V{c}_i = k}\lambda_i = n_k$ for each $k$. Each $\theta_i$ is  independently set either to $c$ or $0.05 c$, with probability 0.5 each, with the value of $c$ chosen so that the resulting average degree of the network is 50. 

We evaluate several spectral clustering algorithms for community detection to demonstrate the importance of identifying the informative part of the eigenstructure rather than blindly relying on spectral clustering.   For a directed network, one alternative to the Right SC algorithm we proposed is to use the left singular vectors for clustering,  which we call ``Left SC''. However, it is clear that under NSBM Left SC will fail given it does not take node heterogeneity into account.   Therefore we instead consider left spherical SC (Left SSC), which first normalizes each row of the matrix of left singular vectors before applying the $K$-means algorithm as a standard way to deal with row heterogeneity \citep{lei2014consistency}.  
We do not include the spherical version of Right SC, since they both use the right singular vectors and produce similar results.   Essentially, since there is no column heterogeneity under this model, there is no need for column normalization.

A common approach to community detection for directed networks is to convert $\tilde{A}$ to a symmetric matrix and then apply an algorithm for an undirected network.    This is typically accomplished by connecting two nodes  in the undirected network if there is an edge in either direction.    Applying SC and SSC to the symmetrized network gives two more options, ``Symmetric SC'' and ``Symmetric SSC'', but they again give similar results, and thus, we omit the spherical version.
This strategy is equivalent to treating the network as generated from the SBM or the degree-corrected SBM \citep{karrer2011stochastic}, respectively.

Lastly, we investigate other models for edge nomination.    Specifically, we include the Bayesian method from \cite{butts2003network}, which includes the model of \cite{newman2018networka} as a special case, assuming that the nomination process does not depend on the individual or the connections strength. This model is representative of the current literature on modeling missing links. The model, however, is not designed for community detection.  Therefore, we take the posterior mean of the probability matrix as input for spectral clustering, which again results in ``Left", ``Symmetric" and "Right" versions.  The posterior inference is implemented in R package \texttt{sna} \citep{sna} and we refer to the three versions as ``Bayesian-Left", ``Bayesian-Symmetric", and ``Bayesian-Right".  This method is computationally expensive and took a very long time to run;  it cannot be applied to large networks.

We evaluate the community detection performance by using the cluster accuracy, defined to be mis-clustered proportion, under the best permutation within the $K$ labels of the estimated clusters. In addition to community detection accuracy, we also evaluated the estimation error of $\tilde{P}$. Though using the true model (the NSBM) is expected to give the best results, we are interested in how closely they can be approximated by using simpler models.    Therefore,  we compared results of Algorithm \ref{algo:PE} to three other computationally feasible network models with communities: the directed SBM and its degree-corrected version, and the stochastic co-clustering block model (SCBM) of \cite{rohe2016co}. The SCBM is also based on the idea of directed DCSBM but assumes different community memberships for senders and receivers; therefore, it is not directly comparable on community detection. However, it can also provide an approximation to $\tilde{P}$.  Lastly, the Bayesian method of \cite{butts2003network} also gives full posterior network distribution, and we take the posterior mean of $P$ together with the edge flipping probability to construct an estimate of $\tilde{P}$. The estimation accuracy is measured by the relative error $\norm{\tilde{P} - \hat{P}}_F^2/\norm{\tilde{P}}_F^2$ averaged over 100 replications, where $\hat{P}$ is the estimated probability matrix.  

\begin{figure}[H]
\centering
\begin{subfigure}[t]{0.48\textwidth}
\centering
\includegraphics[width=\textwidth]{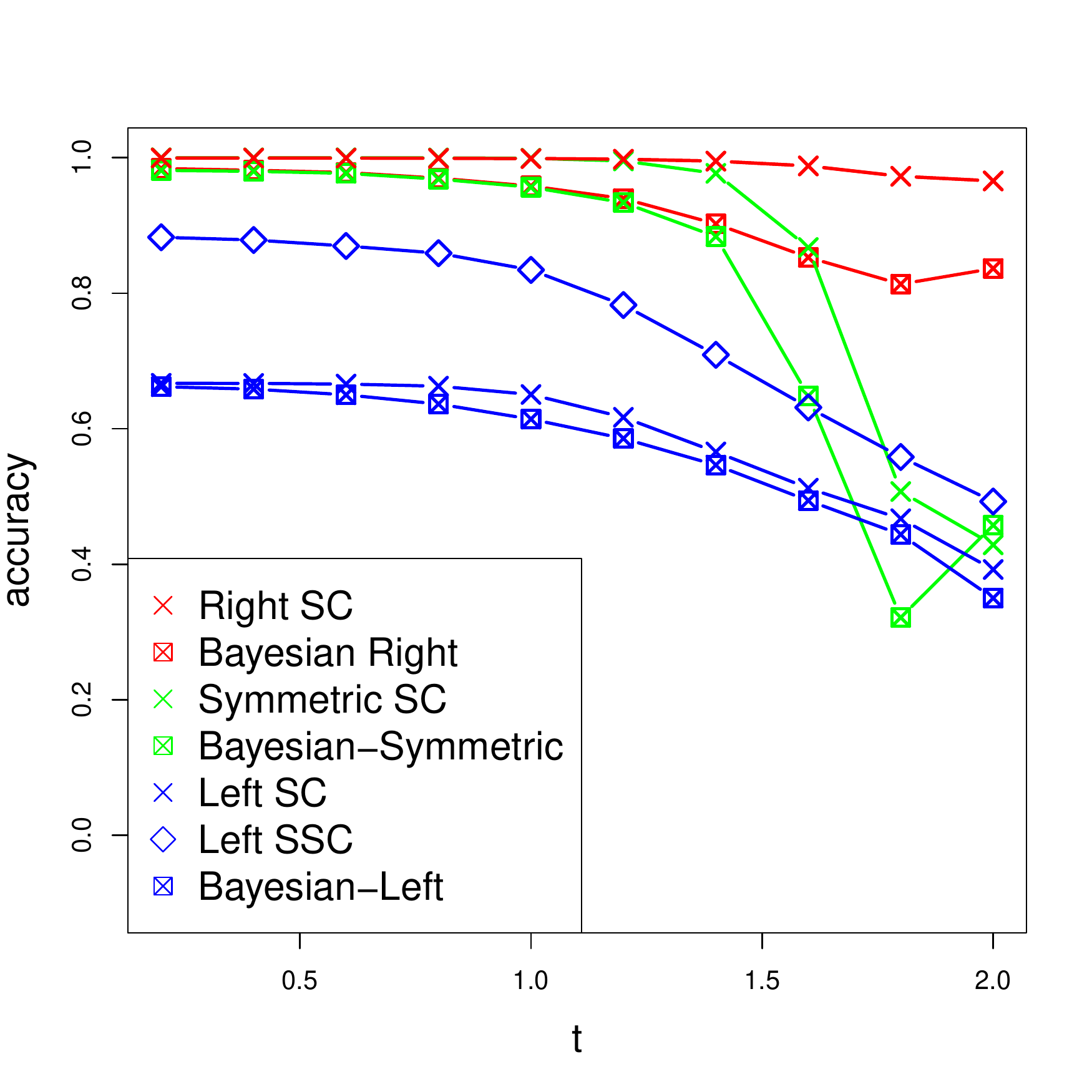}
\caption{Community detection accuracy.}
\label{fig:Varyt-binary-Acc}
\end{subfigure}
\hfill
\begin{subfigure}[t]{0.48\textwidth}
\centering
\includegraphics[width=\textwidth]{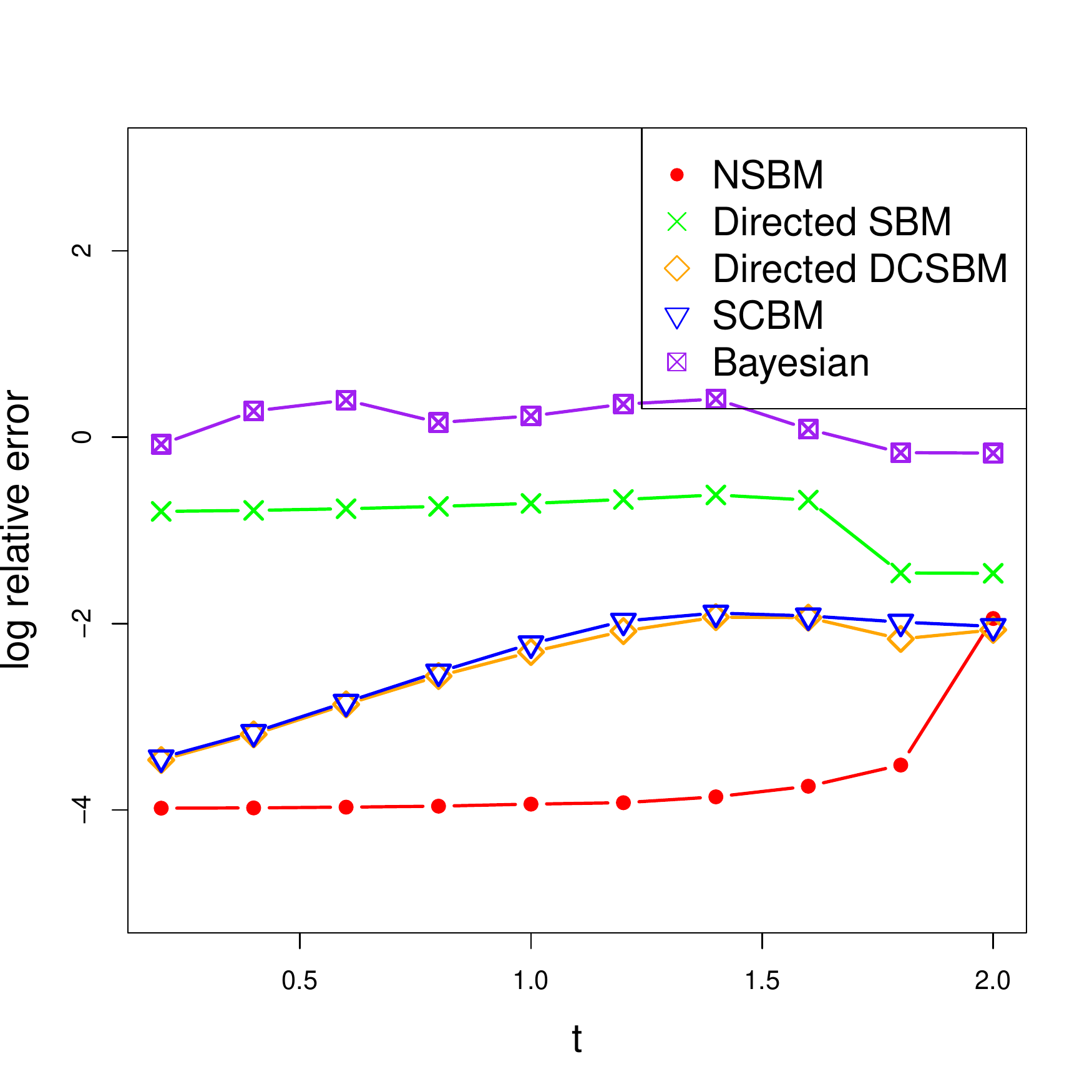}
\caption{Estimation error of $\tilde{P}$ (log scale).}
\label{fig:Varyt-binary-Err}
\end{subfigure}
\caption{Community detection accuracy and probability matrix estimation error when the network is generated from the unweighted NSBM with $\beta = 0.2$ and varying $t$. }
\label{fig:Varyt-binary}
\end{figure}

We start from varying $t$ from 0.2 to 2 while keeping $\beta = 0.2$ fixed.  The results are shown in Figure~\ref{fig:Varyt-binary}.   For community detection, all methods based on the right singular vectors are better than their counterparts using the other types of spectral structures.  In particular, when  $t$ is small, the probability of nomination does not depend on the connection strength that much, and thus spectral clustering based on the standard SBM (or DCSBM)  still works.   As $t$ increases and the nomination mechanism becomes more heterogeneous across the nodes, symmetric clustering methods fail. The Left SSC is even worse since it relies entirely on the senders' information, where the community structure is masked by heterogeneity of nominations. The Bayesian method is not effective in removing the impact of the edge nomination effects.
For estimating the probability matrix,  the NSBM unsurprisingly works the best because it uses the correct model. More importantly, even for small  values of $t$ where symmetric methods perform ok on community detection, none of the other methods come close on estimating the probability matrix.  Moreover, the estimation algorithm for the NSBM remains stable for most of the $t$ in the range, only beginning to degrade when the clustering accuracy drops.  


Next, we compare different methods while varying the signal-to-noise ratio, to see whether the effects of edge nomination become negligible when communities are well separated. Specifically, we vary $\beta$ from 0.1 to 0.9 and fix $t=1.5$. The corresponding results are shown in Figure~\ref{fig:Varybeta-binary}.    The Right SC retains a consistent advantage over the entire range of $\beta$, though all methods fail to give informative clustering for $\beta \ge 0.7$. For model estimation, our method is better than the other for $\beta \le 0.6$. For even larger $\beta$, because of the fading performance of clustering and the higher model complexity, the model estimation error becomes higher than the simpler SBM and SCBM.  Notably, the directed SBM and SCBM slightly improve on estimation as $\beta$ increases, possibly because the probabilities become more homogeneous and easier for the SCBM (or the directed SBM) to approximate. 

\begin{figure}[H]
\centering
\begin{subfigure}[t]{0.48\textwidth}
\centering
\includegraphics[width=\textwidth]{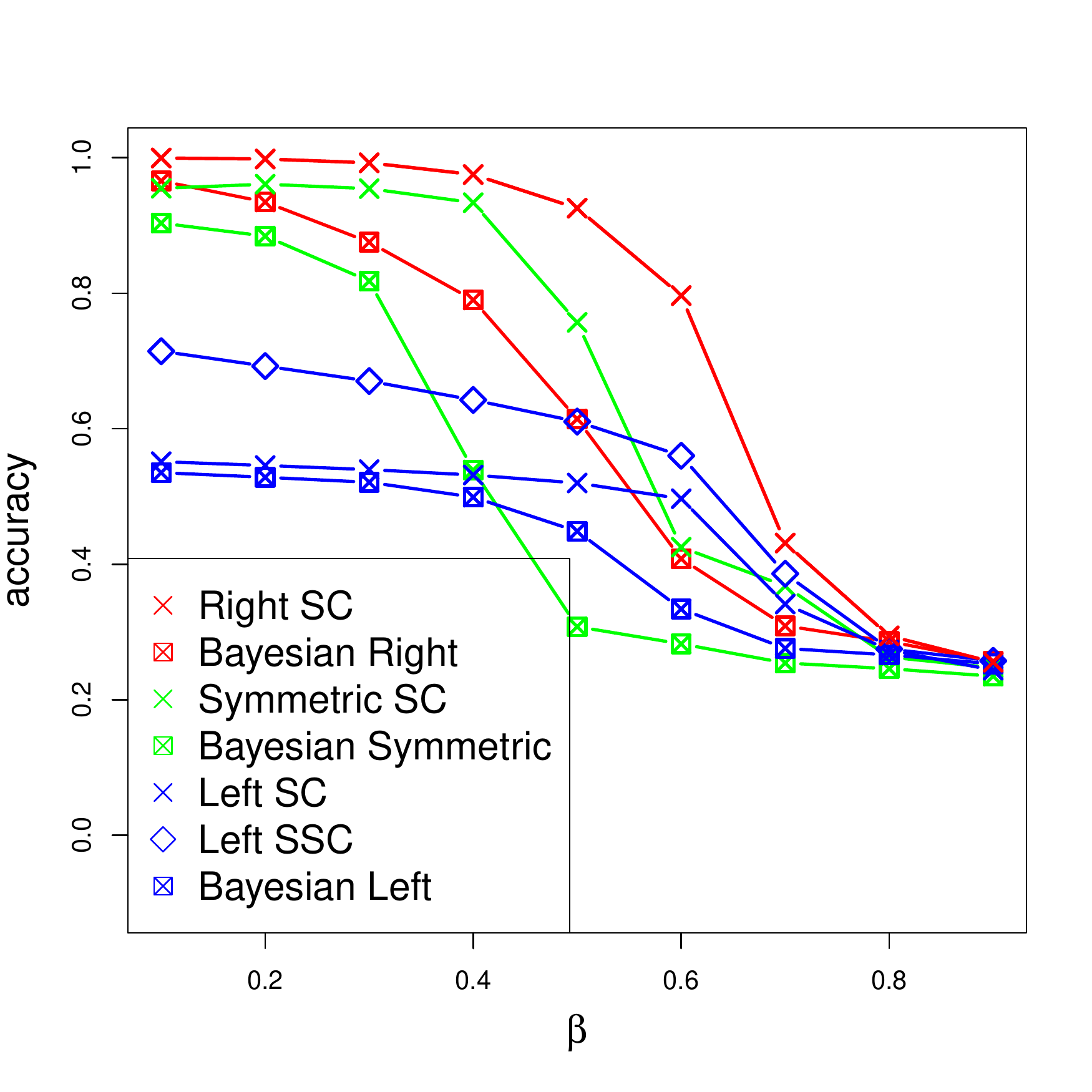}
\caption{Community detection accuracy.}
\label{fig:Varyt-binary-Acc}
\end{subfigure}
\hfill
\begin{subfigure}[t]{0.48\textwidth}
\centering
\includegraphics[width=\textwidth]{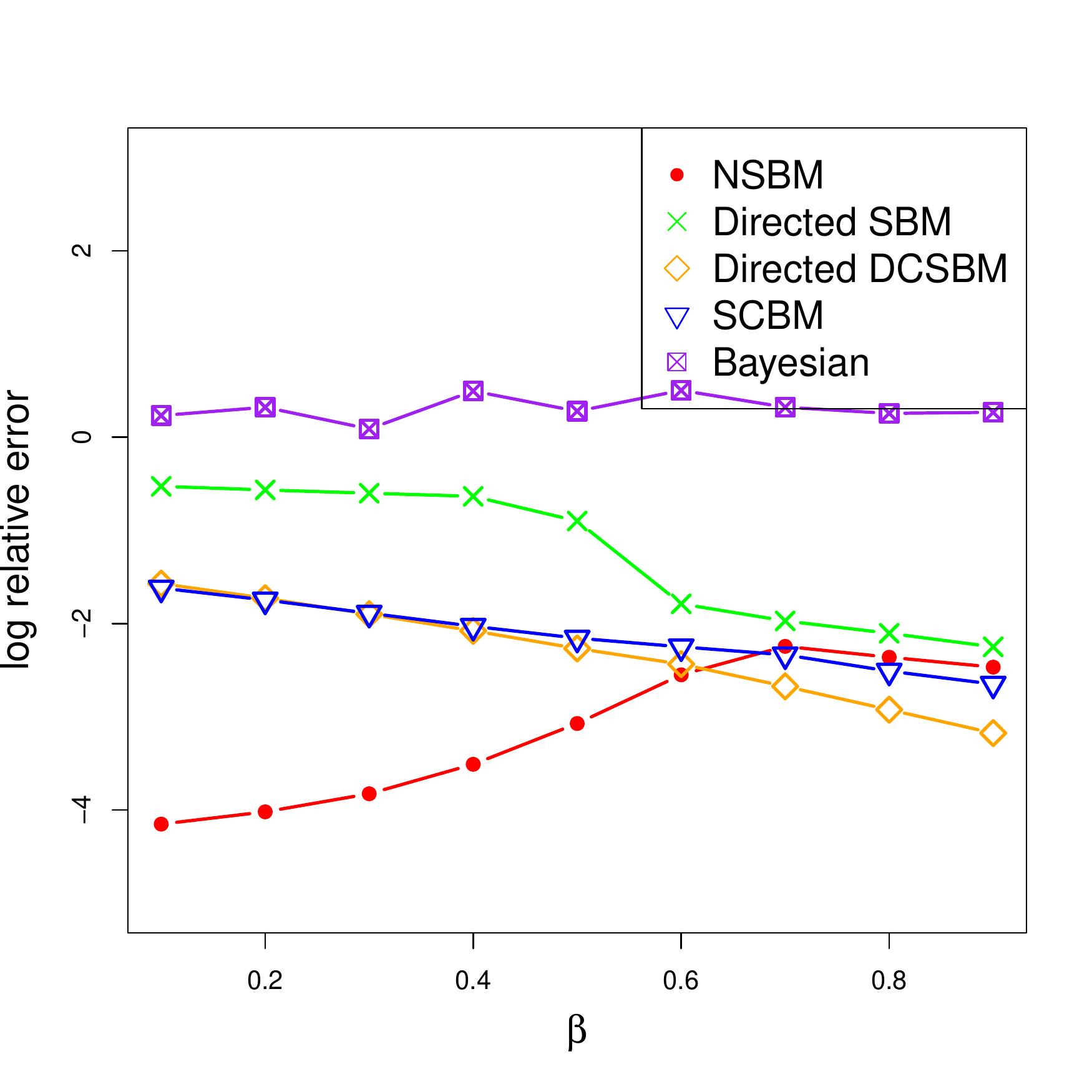}
\caption{Estimation error of $\tilde{P}$ (log scale).}
\label{fig:Varyt-binary-Err}
\end{subfigure}
\caption{Community detection accuracy and probability matrix estimation error when the network is generated from the unweighted NSBM with $t = 1.5$ and varying $\beta$. }
\label{fig:Varybeta-binary}
\end{figure}

%

\subsection{Evaluation on weighted networks}

For this set of experiments, we generate the networks from a Poisson distribution with the same NSBM structure. The Bayesian method from the previous section is no longer applicable, but the other methods can still be used. We use the same two performance metrics and show that our method can be applied to weighted networks without any changes and has similar advantages over its competitors to what we observed in the previous section.
The only difference from the settings described above is that the value of $c$ is set so that the average row sum of $\tilde{A}$ is 250, which roughly gives an average of $50$ nonzero entries in each row of $\tilde{A}$, matching the degree of the unweighted networks.

\begin{figure}[H]
\centering
\begin{subfigure}[t]{0.48\textwidth}
\centering
\includegraphics[width=\textwidth]{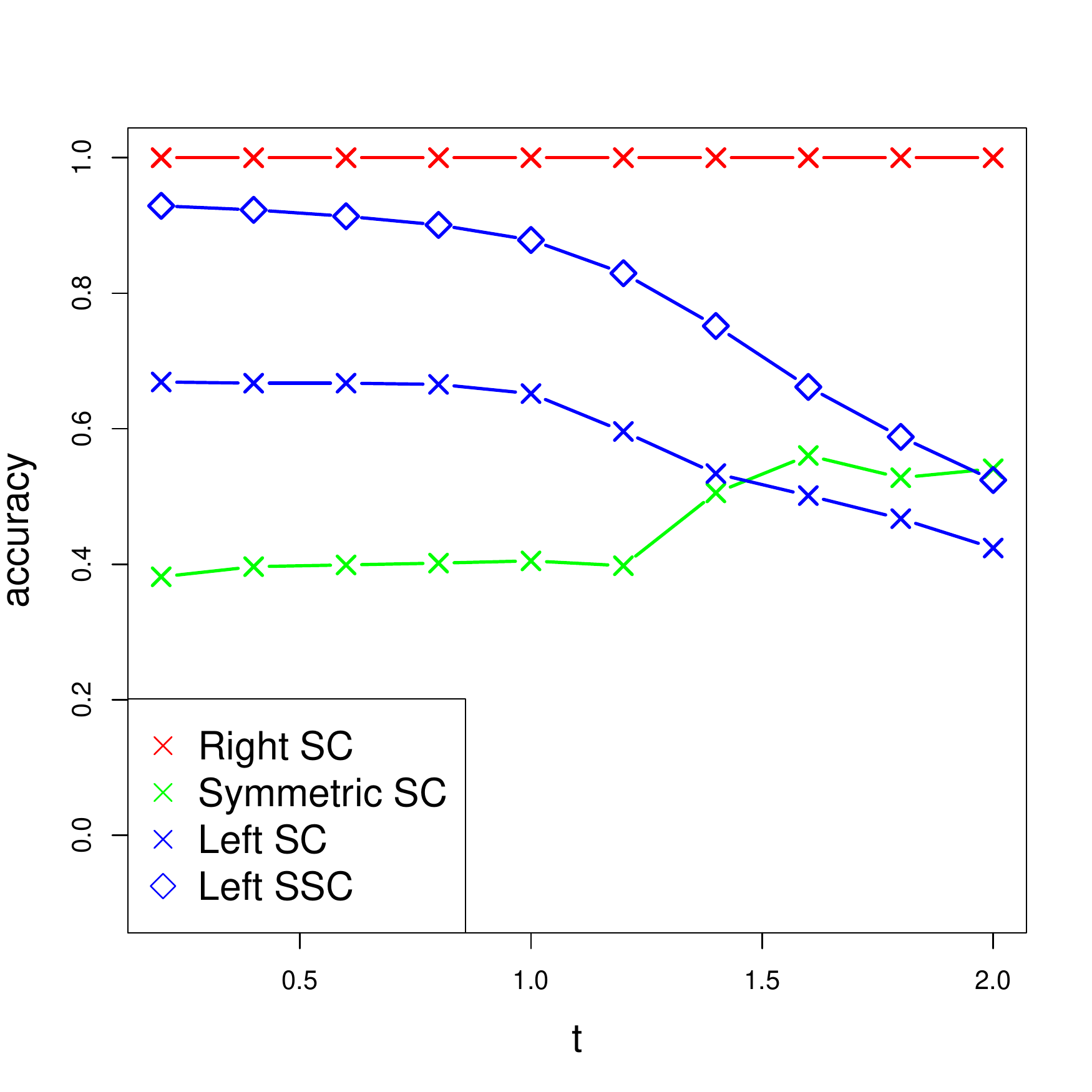}
\caption{Community detection accuracy.}
\label{fig:Varyt-binary-Acc}
\end{subfigure}
\hfill
\begin{subfigure}[t]{0.48\textwidth}
\centering
\includegraphics[width=\textwidth]{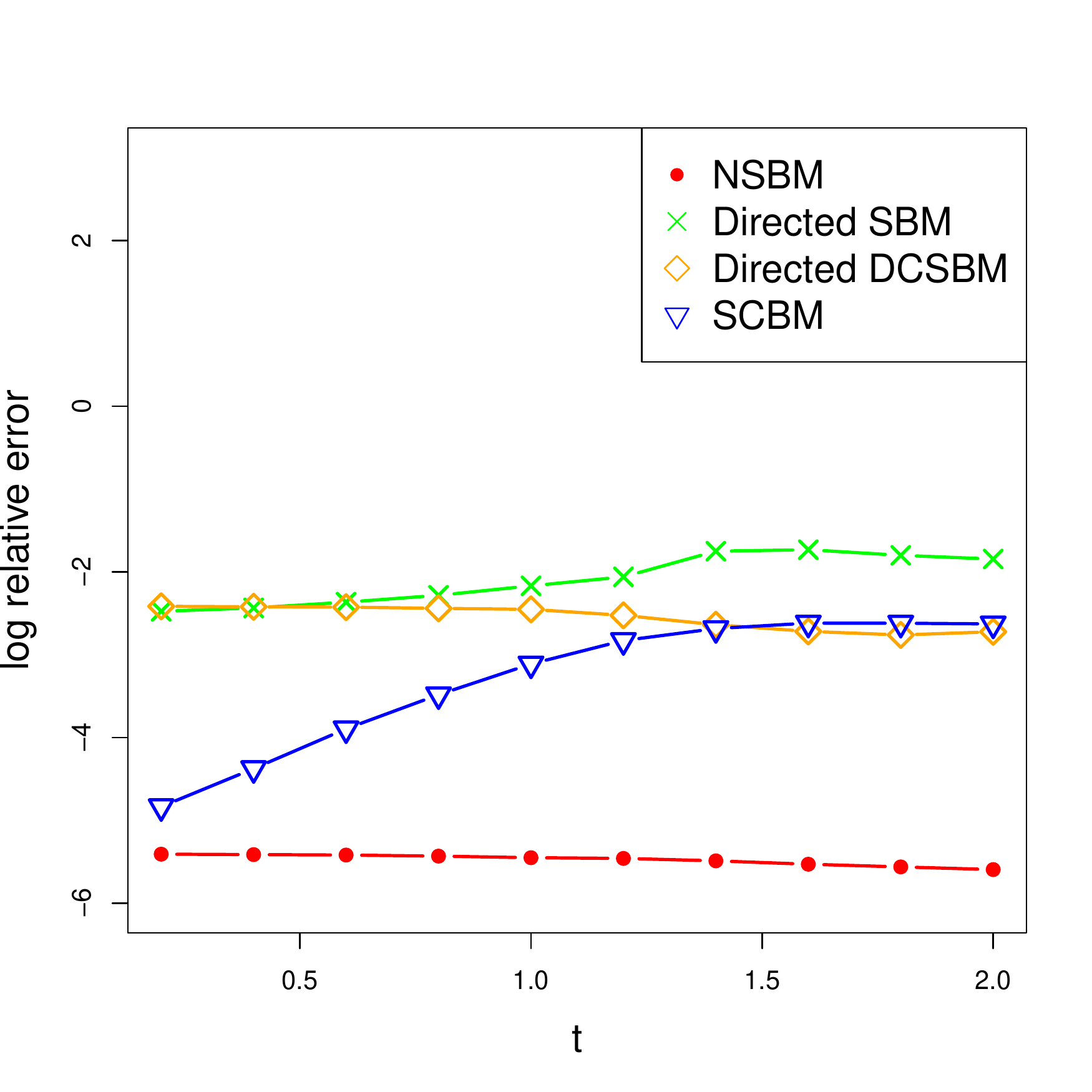}
\caption{Estimation error of $\tilde{P}$ (log scale).}
\label{fig:Varyt-binary-Err}
\end{subfigure}
\caption{Community detection accuracy and probability matrix estimation error when the network is generated from NSBM with Poisson edge weights,  with $\beta = 0.2$ and varying $t$. }
\label{fig:Varyt-weighted}
\end{figure}

The results are shown in Figures~\ref{fig:Varyt-weighted} and~\ref{fig:Varybeta-weighted}.   Our method retains the advantages it had on unweighted networks.  Community detection becomes easier in this case, because Poisson distribution is more informative compared with Bernoulli. The Right SC remains accurate for the range of $t$ from 0 to 2. The method is also very stable for most values of $\beta$, only deteriorating around $\beta > 0.8$.

\begin{figure}[H]
\centering
\begin{subfigure}[t]{0.48\textwidth}
\centering
\includegraphics[width=\textwidth]{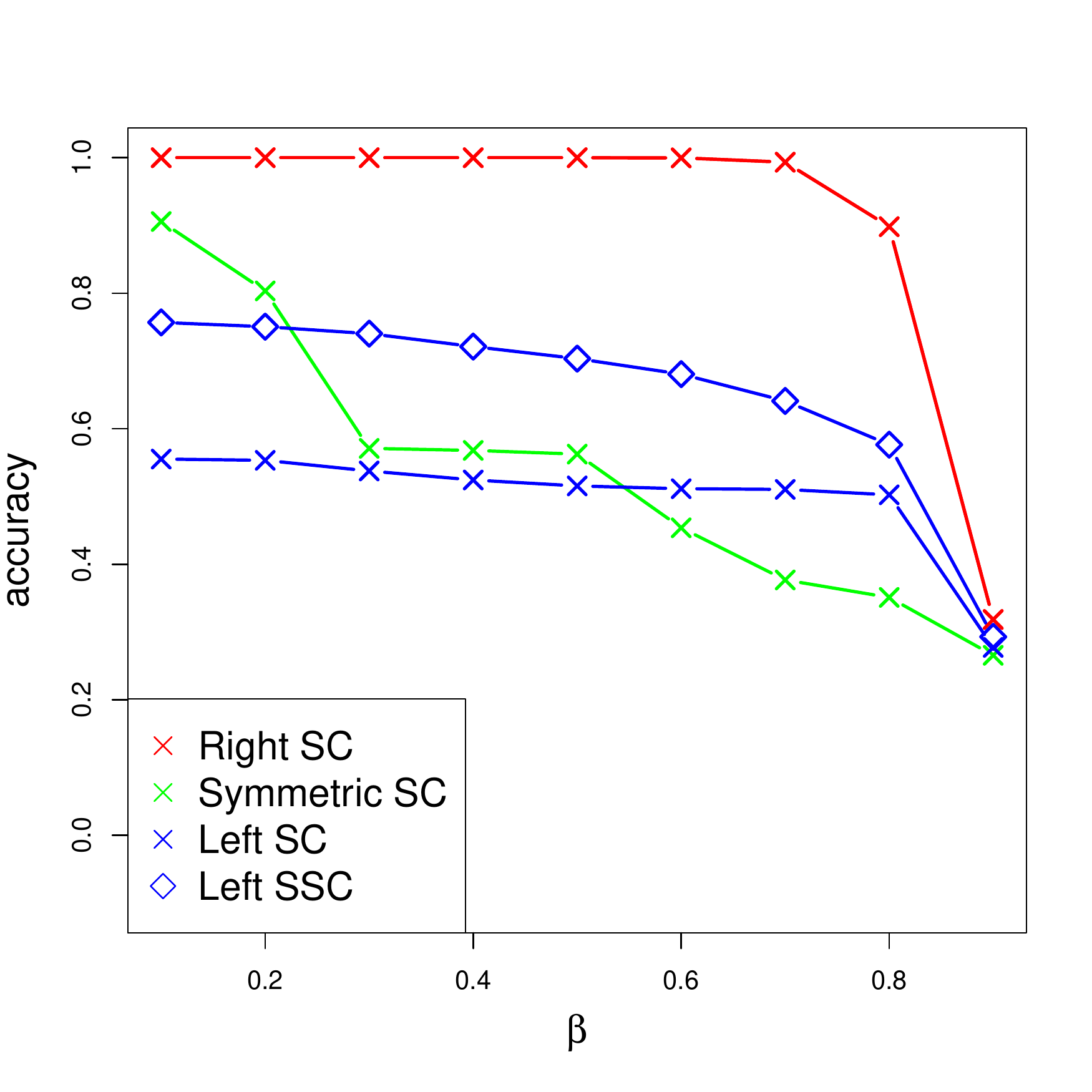}
\caption{Community detection accuracy.}
\label{fig:Varyt-binary-Acc}
\end{subfigure}
\hfill
\begin{subfigure}[t]{0.48\textwidth}
\centering
\includegraphics[width=\textwidth]{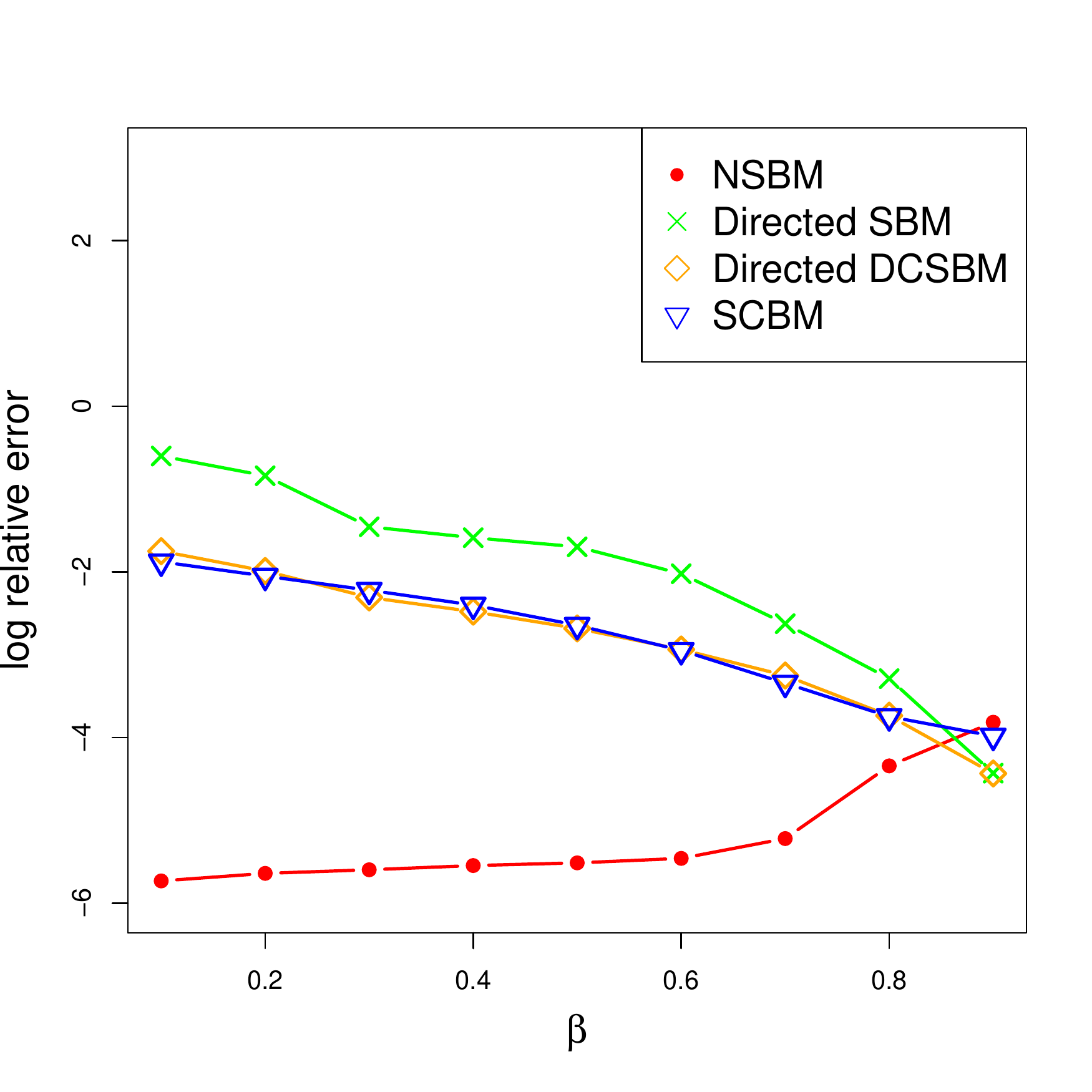}
\caption{Estimation error of $\tilde{P}$ (log scale).}
\label{fig:Varyt-binary-Err}
\end{subfigure}
\caption{Community detection accuracy and probability matrix estimation error when the network is generated from NSBM with Poisson edge weights,  with $t = 1.5$ and varying $\beta$. }
\label{fig:Varybeta-weighted}
\end{figure}

\section{Business school faculty hiring network analysis}\label{sec:BSchool}

Here we apply the proposed NSBM to a faculty hiring network between US Business schools.  The data were collected by \cite{clauset2015systematic} via web crawling and records on 18,924 tenure or tenure-track faculty members, recording the institution from which they obtained their Ph.D.\ and the institution by which they were hired. The original dataset covers faculty in business, computer science, and history; we analyzed the business school data only. 


\begin{figure}[H]
\begin{center}
\includegraphics[scale=0.38]{./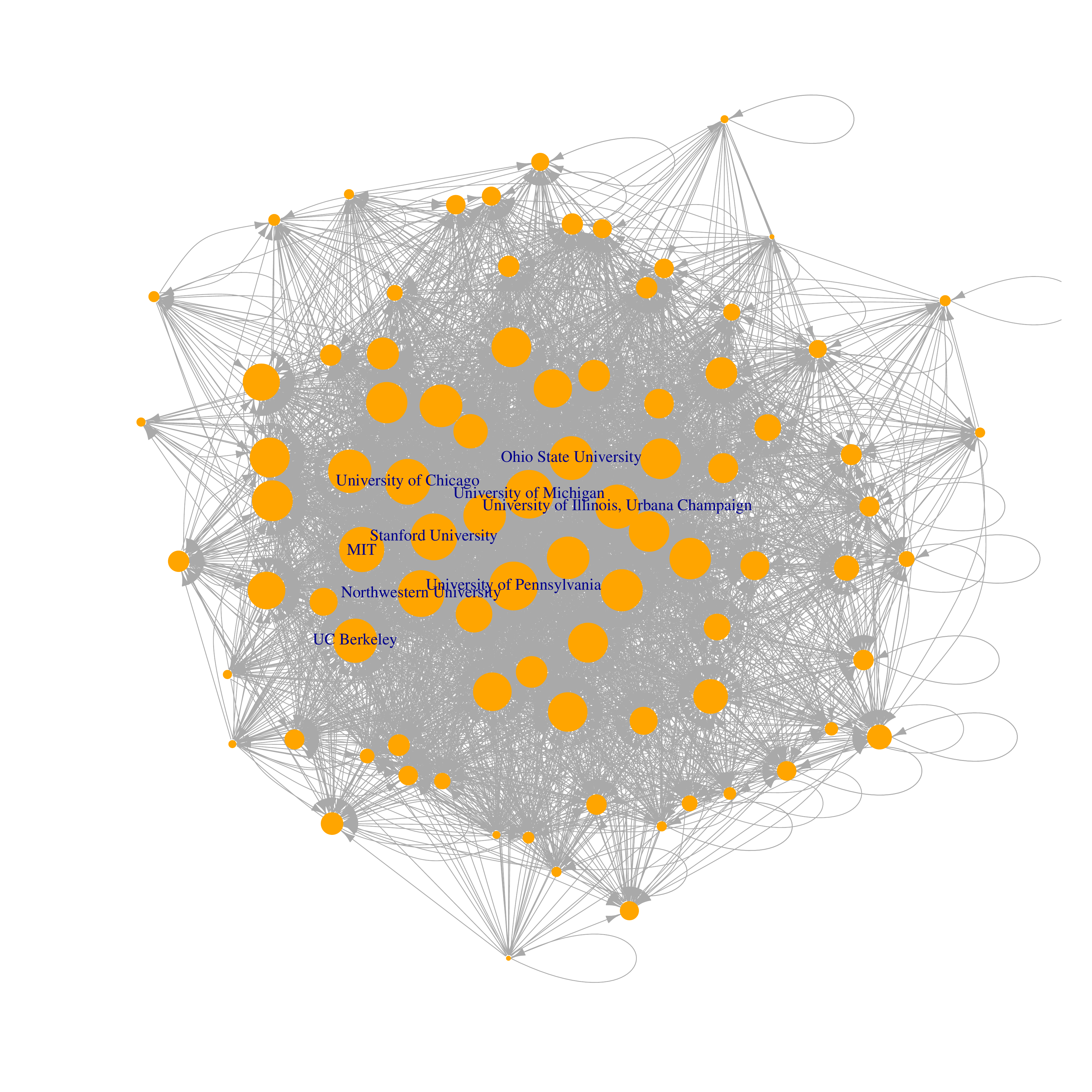}
\vspace{-1cm}
\caption{The hiring network between 87 U.S. business schools. An edge from $i$ to $j$ indicates that institution $i$ has hired Ph.D. graduates from institution $j$. The node size is proportional to the number of incoming edges.}
\label{fig:BusinessNetwork}
 \end{center}
\end{figure}

The business school data covers 7856 faculty members from 112 institutions. To reduce noise, we removed institutions with either receiver or sender degree of 3 or less, resulting in 87 institutions remaining. The hiring numbers have a heavy-tailed distribution, and we applied truncation for stability, as described in Section~\ref{sec:extent}. Specifically, we construct a network by creating an edge  from $i$ to $j$ with weight 1 if institution $i$ has hired exactly one faculty member with a Ph.D. from institution $j$.  If institution $i$ has hired more than one graduate of institution $j$, we set the edge weight to 2.    We found that this truncation of weights does improve stability, whereas setting all edge weights to 0 or 1 loses too much information.   
The resulting directed network with 87 nodes is shown in Figure~\ref{fig:BusinessNetwork}, where the node size is proportional to the receiver degree, i.e., the number of institutions to which institution $i$ has sent its graduates.
We are interested in finding communities of institutions as well as investigating whether there are hiring ``inequalities'' between these communities.     The NSBM suits the hiring network well,  because we do not observe job offers that did not result in a hire, and we can assume that most institutions made some offers that were declined.   It is also safe to assume we do not observe any false edges.

To determine the number of communities, we applied the edge cross-validation method with average stability selection proposed by \cite{li2016network}.  The procedure suggests $K=4$ communities for this dataset.  We then fit the NSBM to the network with $K= 4$. Table~\ref{tab:BSchoolCommunity} shows the communities and their average rankings from two sources, the US NEWS graduate school rankings from 2012 (included in the data set)  and the $\pi$-ranking proposed by \cite{clauset2015systematic}.  The $\pi$-ranking is designed to measure hiring advantage, with a higher-ranked institution tending to be more successful in hiring competitive candidates.  We list up to 15 institutions (the first two communities only have 12 each, whereas the other two have 19 and 44) with the highest $\pi$-ranking within each community in Table \ref{tab:BSchoolCommunity} .  Overall, the communities match both rankings very well, showing a clear ordering, with the first community mostly consisting of top business schools, the second one of good but slightly lower-ranked schools, and so on.  

\begin{table}[H]
{\centering
\caption{Communities of business schools found by NSBM, their average and median rankings from US News 2012 and $\pi$-ranking of \cite{clauset2015systematic}.  Up to 15 institutions with the highest $\pi$-ranking are listed for each community.}
\label{tab:BSchoolCommunity}
\small
\begin{tabular}{|p{0.3cm}|p{0.5cm}|p{1.8cm}|p{1.8cm}|p{9cm}|}

  
  \hline
 & size & USNews (avg./med.) & $\pi$-ranking  (avg./med.) & Institutions \\ 
  \hline
  1 & 12&7.7/8& 8.3/8&Stanford, MIT, Harvard, UC Berkeley, U Chicago, Cornell, U Michigan, Columbia, Yale, U Penn., NYU, Duke\\ 
  \hline
  2 & 12&29.8/32.5 &17.7/17.5 &U Rochester, Northwestern, Carnegie Mellon, U Wisconsin Madison, UCLA, U Minnesota-Twin Cities, UIUC, Purdue, U Florida, UT Austin,  U Washington \\ 
   \hline
   3&19& 53.1/54&45/45& Ohio State, UNC Chapel Hill, U Pittsburgh, Penn.\ State, Indiana U., Michigan State, Georgia Tech, U Arizona, SUNY Buffalo, Texas A\&M, U Georgia, Arizona State, U South Carolina, Virginia Tech, Florida State
   \\
   \hline
   4&44& 63.7/63 & 61.4/61.5 & Washington U St.\ Louis, U Maryland College Park, U Colorado Boulder, UC Irvine, U Utah, U Oregon, U Southern California, UT Dallas, U Virginia, Boston U., UMass Amherst, Emory, Case Western, UC Davis, Vanderbilt \\
   \hline
\end{tabular}
}
\end{table}

The parameters of NSBM can be directly interpreted in terms of a hierarchy in hiring, which was reported by \cite{clauset2015systematic}. Based on the weighted NSBM in Section~\ref{sec:extent}, we define connection strength from community $k$ to community $l$  as the expectation of average edge weights from nodes $i\in G_k$ to nodes $ j \in G_l$, 
$$M_{kl} = \frac{1}{n_kn_l}\sum_{i \in G_k, j\in G_l}\theta_iB_{ij}^{\lambda_i} .  $$

\begin{table}[H]
\caption{Estimated strengths of connections between communities.} 
\label{tab:Bmatrix}
\centering
\begin{tabular}{|l|cccc|}
  \hline
 & Group 1 & Group 2 & Group 3 & Group 4 \\ 
  \hline
Group 1 & 1.86 & 0.93 & 0.15 & 0.06 \\ 
  Group 2 & 1.56 & 1.31 & 0.42 & 0.13 \\ 
  Group 3 & 0.86 & 1.29 & 0.98 & 0.32 \\ 
  Group 4 & 0.76 & 0.86 & 0.51 & 0.22 \\ 
  \hline
\end{tabular}
\end{table}

Table~\ref{tab:Bmatrix} shows the estimated connection strengths for the business hiring network.  It shows that Group 1 institutions tend to hire the most from their own group, and about half as many from Group 2.  They are not very likely to hire from Groups 3 and 4. Group 2 institutions hire roughly equally from Groups 1 and 2, and a fraction from Group 3, but very few from Group 4. Group 3 institutions hire the most from Group 2, not Group 1.  Group 4 follows a similar pattern, hiring more from groups closer to itself. The estimated model parameters thus indicate a strong hierarchy in hiring relationships between the groups, which aligns closely with the rankings in Table~\ref{tab:BSchoolCommunity}.

In addition to community parameters,  NSBM allows us to estimate the hiring preferences of individual institutions as represented by parameters $\lambda_i$. Table~\ref{tab:lambda} shows the estimated $\lambda_i$'s  for Group 1, indicating how strongly each institution follows the community level preferences.   For instance, we see that Yale and Cornell show a stronger  preference for hiring graduates from their own group,  while the University of Michigan and the University of Pennsylvania are not so stringent. 

\begin{table}[H]
\caption{Estimated $\lambda_i$'s for Group 1 institutions.} 
\label{tab:lambda}
\centering
\begin{tabular}{|l|c|c|c|}
  \hline
 Institution & $\hat{\lambda}_i$ & USN ranking & $\pi$-ranking \\ 
  \hline
Yale  & 2.17 & 10 &  11 \\ 
Cornell  & 1.88 & 16 &   7 \\ 
Columbia  & 1.10 & 9 &  10 \\ 
Harvard  & 0.97 & 2 &   3 \\ 
MIT & 0.96 & 3 &   2 \\ 
UC Berkeley & 0.85 & 7 &   4 \\ 
U of Chicago & 0.75 & 5 &   6 \\ 
Stanford  & 0.74 & 1 &   1 \\ 
New York U & 0.70 & 10 &  16 \\ 
Duke  & 0.69 & 12 &  19 \\ 
U  Michigan & 0.61 & 14 &   9 \\ 
U  Pennsylvania & 0.57 & 3 &  12 \\ 
   \hline
\end{tabular}
\end{table}

Overall, fitting the NSBM to the faculty hiring network revealed a clear hierarchical structure in the hiring relationships between institutions, which matches both our expectations and the observations of \cite{clauset2015systematic}.

For comparison, we briefly discuss community detection results on this network obtained by spectral clustering applied to the undirected version. The four communities are shown in Table~\ref{tab:SymmetricBSchoolCommunity}, with their average and median ranking by US News and $\pi$-ranking, and 15 institutions with the highest $\pi$-ranking in each community.  
The first group is still higher-ranked even though it no longer includes universities like Yale, Cornell, and Columbia.  The other  three groups, however, all have similar average rankings and no discernible interpretation that we could think of that might result in such groupings.   The striking difference between the average rankings of groups from these two clustering results confirms the importance of accounting for the nomination mechanism and using the correct spectral information.

\begin{table}[H]
{
\centering
\caption{Communities of business schools found by symmetric spectral clustering, their average and median rankings from US News 2012 and $\pi$-ranking of \cite{clauset2015systematic}.  Up to 15 institutions with the highest $\pi$-ranking are listed for each community.}
\label{tab:SymmetricBSchoolCommunity}
\small
\begin{tabular}{|p{0.3cm}|p{0.5cm}|p{1.8cm}|p{1.8cm}|p{9cm}|}
  \hline
 & size & USN (avg./med.) & $\pi$-ranking  (avg./med.) & Institutions \\ 
  \hline
  1 & 19&19.2/14& 17.8/13&Stanford, MIT, Harvard, UC Berkeley, U Rochester, U Chicago, Northwestern, U Michigan, U Penn., Carnegie Mellon, NYU, U Minnesota Twin Cities, Duke, UNC Chapel Hill, U Washington St. Louis 
  \\ 
   \hline
  2&20& 55.1/56.5 & 44.6/42 &Cornell, Columbia, U Wisconsin-Madison, UIUC, Ohio State, U Florida, U Pittsburgh, Penn State, Michigan State, SUNY Buffalo, U Mass Amherst, Syracuse, Tulane, U Connecticut, U Cincinnati 
  \\
   \hline
  3 & 24&52.7/40 &54/49 &Yale, UCLA, U Washington, U Colorado Boulder, UC Irvine, U Utah, U Oregon,  UT Dallas, U Virginia, Boston U,  UC Davis, Vanderbilt, Claremont Graduate U, U Houston, Rice U 
  \\ 
   \hline
  4&24& 63.8/63&56/56.5& Purdue, U Iowa, UT Austin, Indiana U, Georgia Tech, U Arizona, Texas A\&M, U Georgia, Arizona State, U South Carolina, Virginia Tech, Florida State, U Oklahoma, U Kansas, Louisiana State 
  \\
  \hline
\end{tabular}
}
\end{table}

\section{Discussion}\label{sec:con}

We have proposed a general framework to model a directed network with communities, with network data collected by asking nodes to report or nominate their connections, a common scenario in practice.  A particular parametric form of the general model, NSBM, allows for meaningful interpretation of the parameters and computationally efficient fitting algorithms.  Other parameterizations can be set up within the general framework, perhaps for specific data collection procedures and/or applications.  The clustering algorithm we proposed applies to any such model, whereas the parameter estimation algorithm would naturally need to be derived for every parametric model separately.  In all cases, the critical point is that when we do not observe the whole network, pretending that we do tends to lead to a drop in accuracy and loss of efficiency. We saw this empirically in both simulated networks and the business school faculty hiring network.

A potential shortcoming of our current framework is that it assumes each edge is nominated independently from other edges originating from the same nominating node. This assumption will be reasonable in some applications and not so reasonable in others. Modeling dependency between edge nominations, at least those coming from the same node can be a potentially fruitful yet challenging direction for future work.  For these more complex models, estimation may have to be by variational inference or MCMC, leading to a potential trade-off on computational efficiency.  Another potentially fruitful direction is modeling network structures other than communities and investigating how incomplete and heterogeneous link nominations can affect our estimation of different types of structures and what models can be used to account and correct for the nomination process.  

\section*{Acknowledgements}

T. Li was supported in part by and NSF grant (DMS-2015298) and the Quantitative Collaborative grant from the College of Arts \& Sciences at the University of Virginia. E. Levina and T. Li (while a PhD student at the University of Michigan) were supported in part by an ONR grant (N000141612910) and NSF grants (DMS-1521551 and DMS-1916222). J. Zhu and T. Li (while a PhD student at the University of Michigan) were supported in part by NSF grants (DMS-1407698
and DMS-1821243).

\bibliography{CommonBib}{}
\bibliographystyle{abbrvnat}

\newpage

\begin{appendix}

 \section{Proofs}
\label{app:proofs} 
 
\subsection{Model identifiability}\label{appsecsec:identification-proof}

 \begin{proof}[Proof of Proposition~\ref{prop:identify}]
 We need to show that given the probability matrix $\tilde{P} = (\tilde{P}_{ij}) = (\theta_{i}B_{c_ic_j}^{\lambda_i})$ and the community labels $\V{c}$, all parameters are uniquely determined under the current constraints. Without loss of generality, we focus on identifying the parameter for one arbitrary community $k$. For any $i \in G_k$, we have
\begin{equation}\label{eq:log-decomposition}
\mu_{il}  =\log(\theta_iB_{kl}^{\lambda_i}) = \log(\theta_i) + \lambda_i\log(B_{kl})
\end{equation}
where we treat $\log(0)$ as $-\infty$. It can be seen that $\log(\theta_i) = \mu_{ik}$ by setting $l = k$ under the constraint $B_{kk}=1$.

Write $\V{b} =(\log(B_{k1}), \cdots, \log(B_{k,K}))$.  Notice that for any $1\le l \le K$ such that $B_{kl} \ne 0$, we have
\begin{equation}\label{eq:momentRelation1}
\mu_{ik} - \mu_{il} = \lambda_i(b_k - b_l).
\end{equation}
The constraint on $\lambda_i$ indicates that there exists at least one node $i$ with non-zero $\lambda_i$. Since there exists at least one $l$ such that $0 < B_{kl} \ne B_{kk}$, $b_k - b_l \ne 0$, we can locate one such node (denoted by $i_0$) and community (denoted by $l_0$) by identifying $i$ and $l$ corresponding to a non-zero $\mu_{ik} - \mu_{il}$. Given this $l_0$, we can uniquely determine the ratio between all non-zero $\lambda_i$'s. The nodes with $\lambda_i = 0$ can be directly identified from $\mu_{ik} - \mu_{il_0} =0$. Therefore, with the constraint $\sum_{i \in G_k}\lambda_k = n_k$, the identification of $\lambda_i$'s is guaranteed.

Fixing the node $i_0$, $b$ can be determined by \eqref{eq:momentRelation1} up to a shift. Since we constrain $b_k = B_{kk}=1$, all the other entries of $B_{k\cdot}$ are also identifiable. 

\end{proof}

  \subsection{Community detection}\label{appsecsec:community-proof}
 
\begin{proof}[Proof of Proposition~\ref{lem:eigen}]
It is easy to check that $\tilde{P} = FZ^T$ where $F$ is the matrix obtained by applying function $F_i$ to each element of the $i$th row of the matrix $ZB$. Write $\Delta = \diag(\sqrt{n_1}, \cdots, \sqrt{n_K})$. Assume that the SVD of $F\Delta$ is given by
$$F\Delta = UDV^T.$$
We have
$$\tilde{P} = UDV^T(Z\Delta^{-1})^T = UD(Z\Delta^{-1}V)^T.$$
Notice that $Z\Delta^{-1}$ is an orthonormal matrix and so is $Z\Delta^{-1}V$. Taking $X = \Delta^{-1}V$ gives the SVD of $\tilde{P}$, up to the standard invariances (sign-flipping and rotation within subspaces corresponding to equal singular values).Note that for a full rank $B$, $V$ is a $K\times K$ orthonormal matrix. The distance claim follows directly from the orthogonality of rows of $V$.
\end{proof}

  We will use the following three known results on spectral clustering.
  
  \begin{lem}[Lemma 7 of \cite{chen2014network}]\label{lemma:Frobenius}
  Let $M, \widehat{M}$ be two matrices of size $n\times n$ and $V, \widehat{V}$ be the $n\times K$ orthogonal matrices of top $K$ right singular vectors of $M$ and $\widehat{M}$. Then there exists a $K\times K$ orthogonal matrix $Q$ such that
  $$\norm{\widehat{V}Q-V}_F \le \frac{2\sqrt{2K}\norm{\widehat{M}-M}}{\sigma_K(M)}.$$
  \end{lem}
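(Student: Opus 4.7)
The plan is to derive this perturbation bound in three short moves: a trivialization argument that gives us a usable eigengap, Wedin's $\sin\Theta$ theorem applied to right singular subspaces, and the standard Procrustes identity that converts a principal-angle bound into a bound on the best orthogonal alignment of two orthonormal bases. Throughout, I read $\norm{\cdot}$ on the right-hand side as the spectral norm. First I observe that the claim is trivial whenever $\norm{\widehat{M}-M} \ge \sigma_K(M)/2$: in that regime the right-hand side already exceeds $2\sqrt{K}$, while any orthonormal $V,\widehat{V}\in\bR^{n\times K}$ and any orthogonal $K\times K$ matrix $Q$ satisfy $\norm{\widehat{V}Q-V}_F \le \norm{\widehat{V}Q}_F + \norm{V}_F = 2\sqrt{K}$. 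So I may assume $\norm{\widehat{M}-M} < \sigma_K(M)/2$, and Weyl's inequality then gives $\sigma_{K+1}(\widehat{M}) \le \sigma_{K+1}(M) + \norm{\widehat{M}-M} \le \sigma_K(M)/2$, so $\sigma_K(M)-\sigma_{K+1}(\widehat{M}) \ge \sigma_K(M)/2$.

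Second, I apply Wedin's $\sin\Theta$ theorem to the top-$K$ right singular subspaces of $M$ and $\widehat{M}$, obtaining
$$\norm{\sin\Theta(\widehat{V},V)} \le \frac{\norm{\widehat{M}-M}}{\sigma_K(M)-\sigma_{K+1}(\widehat{M})} \le \frac{2\norm{\widehat{M}-M}}{\sigma_K(M)}.$$
Because $\sin\Theta(\widehat{V},V)$ has at most $K$ nonzero singular values, $\norm{\sin\Theta}_F \le \sqrt{K}\,\norm{\sin\Theta}$.

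Third, I invoke the Procrustes identity: letting $\widehat{V}^TV = A\Sigma B^T$ be the SVD and setting $Q = AB^T$, a direct calculation using $V^TV=\widehat{V}^T\widehat{V}=I_K$ gives
$$\norm{\widehat{V}Q - V}_F^2 = 2\,\tr(I_K - \Sigma) \le 2\,\tr(I_K - \Sigma^2) = 2\norm{\sin\Theta(\widehat{V},V)}_F^2,$$
where the middle inequality uses $\Sigma \in [0,I_K]$ (singular values of a product of orthonormal matrices lie in $[0,1]$) and the final equality is the definition of the principal-angle matrix. Thus $\norm{\widehat{V}Q-V}_F \le \sqrt{2}\norm{\sin\Theta}_F$. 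Chaining this with the bounds from the first two steps yields the claimed $\dfrac{2\sqrt{2K}\,\norm{\widehat{M}-M}}{\sigma_K(M)}$.

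The main (and essentially only) delicacy is controlling the denominator $\sigma_K(M)-\sigma_{K+1}(\widehat{M})$ that naturally appears in Wedin's theorem by the cleaner quantity $\sigma_K(M)$; this is what forces the preliminary case split against $\sigma_K(M)/2$ and the factor of $2$ in the final constant. A sharper bound tracking the actual eigengap would remove that $2$, but since the constant is absorbed by the application in Theorem~\ref{thm:CommunityConsistency} the current form is all that is needed.
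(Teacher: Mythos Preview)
The paper does not prove this lemma at all; it is quoted verbatim as Lemma~7 of \cite{chen2014network} and used as a black box in the proof of Theorem~\ref{thm:CommunityConsistency}. Your argument supplies a proof the paper omits, and the overall strategy---case split to manufacture a gap, Wedin's $\sin\Theta$ theorem, then the Procrustes identity---is the standard route and is sound in outline.

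There is, however, an arithmetic slip in your trivialization step. When $\norm{\widehat M - M}\ge \sigma_K(M)/2$, the right-hand side is at least $2\sqrt{2K}\cdot\tfrac12=\sqrt{2K}$, not $2\sqrt{K}$ as you claim; the triangle-inequality bound $\norm{\widehat V Q - V}_F\le 2\sqrt{K}$ is then too weak to close the case. The fix is immediate and already implicit in your third step: for the Procrustes-optimal $Q=AB^T$ one has $\norm{\widehat V Q - V}_F^2=2\,\tr(I_K-\Sigma)\le 2K$ since $\Sigma\ge 0$, hence $\norm{\widehat V Q - V}_F\le \sqrt{2K}$, which does match the lower bound on the right-hand side. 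A second, more structural point: your Weyl chain $\sigma_{K+1}(\widehat M)\le \sigma_{K+1}(M)+\norm{\widehat M-M}\le \sigma_K(M)/2$ tacitly uses $\sigma_{K+1}(M)=0$. That holds in the paper's application (where $M=\tilde P$ has rank exactly $K$) but is not part of the lemma as stated; without a rank-$K$ hypothesis on $M$ the bound with $\sigma_K(M)$ alone in the denominator is simply false. This is a defect of the lemma's statement rather than of your method, but you should flag the assumption explicitly.
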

  
  The orthogonal matrix $Q$ makes no difference for subsequent developments and will be omitted.
  
  \begin{lem}[Lemma 5.3 of \cite{lei2014consistency}]\label{lemma:Kmeans}
  Let $V, \widehat{V}$ be two $n\times K$ matrices with $V$ having only $K$ distinct rows, corresponding to $K$ communities denoted by $\V{c}$. Let $\hat{\V{c}}$ be the output of a $K$-means clustering algorithm on $\widehat{V}$, with objective value no larger than $1+\epsilon$ of the global optimum \citep{kumar2004simple}. Denote the community indices corresponding to $\V{c}$ and $\hat{\V{c}}$ by $\{G_k\}$ and $\{\hat{G}_k\}$. Define $S_k = \{i: i\in G_k, \hat{c}_i \ne k\}$.
  For any $\delta$ smaller than the minimum distance between any two distinct rows of $V$, if
  $$8(2+\epsilon)\norm{\widehat{V}-V}_F^2 \le n_{\min}\delta^2$$
  where $n_{\min} = \min_k |G_k|$, then there exists a permutation of the $K$ community labels in $\hat{\V{c}}$, such that
  $$\sum_{k=1}^K|S_k| \le 8(2+\epsilon)\frac{\norm{\widehat{V}-V}_F^2}{\delta^2}.$$
   
  \end{lem}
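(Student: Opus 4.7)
The plan is to follow the standard two-step rounding analysis for approximate $K$-means. Let $\widehat{X}$ denote the centroid matrix output by the solver: each row of $\widehat{X}$ equals the centroid of its estimated cluster, so $\widehat{X}$ has at most $K$ distinct rows, and by the $(1+\epsilon)$-approximation guarantee $\|\widehat{V}-\widehat{X}\|_F^2 \le (1+\epsilon)\min_{X}\|\widehat{V}-X\|_F^2$, where the minimum is taken over $n\times K$ matrices $X$ with at most $K$ distinct rows. I will proceed in three steps: bound $\|\widehat{X}-V\|_F$; count nodes whose estimated centroid is far from the truth; and turn this count into a mis-clustering bound using the $\delta$-separation of the rows of $V$.

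For the first two steps, since $V$ itself has at most $K$ distinct rows and is therefore feasible, the global minimum is at most $\|\widehat{V}-V\|_F^2$, so that
\[
\|\widehat{X}-V\|_F^2 \le 2\|\widehat{X}-\widehat{V}\|_F^2 + 2\|\widehat{V}-V\|_F^2 \le 2(2+\epsilon)\|\widehat{V}-V\|_F^2.
\]
Define the good set $T_k=\{i\in G_k:\|\widehat{X}_{i\cdot}-V_{i\cdot}\|_2 < \delta/2\}$. Since every $i\in G_k\setminus T_k$ contributes at least $\delta^2/4$ to $\|\widehat{X}-V\|_F^2$,
\[
\sum_k |G_k\setminus T_k| \le \frac{4\|\widehat{X}-V\|_F^2}{\delta^2} \le 8(2+\epsilon)\frac{\|\widehat{V}-V\|_F^2}{\delta^2}.
\]
The hypothesis of the lemma forces this sum to be strictly less than $n_{\min}\le n_k$, so every $T_k$ is non-empty.

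The third step, producing a permutation $\pi$ of $[K]$ for which $S_k \subseteq G_k\setminus T_k$ holds for every $k$, is the main obstacle and is where the $\delta$-separation is essential. The key observation: if $i\in T_k$ and $j\in T_{k'}$ with $k\ne k'$ belonged to the same estimated cluster, then $\widehat{X}_{i\cdot}=\widehat{X}_{j\cdot}$ and the triangle inequality would give $\|V_{i\cdot}-V_{j\cdot}\|_2 < \delta$, contradicting the assumption that $\delta$ is smaller than the minimum distance between distinct rows of $V$. Hence the estimated centroids represented in different $T_k$'s are disjoint subsets of the (at most) $K$ possible centroid values. Since there are $K$ non-empty $T_k$'s competing for at most $K$ centroid slots, a pigeonhole argument forces each $T_k$ to sit inside a single estimated cluster $\widehat{G}_{\pi(k)}$, and $\pi$ is a permutation of $[K]$. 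Under this $\pi$ any mis-classified node of $G_k$ must lie in $G_k\setminus T_k$, so combining with the count from the previous step yields $\sum_k |S_k|\le 8(2+\epsilon)\|\widehat{V}-V\|_F^2/\delta^2$. A minor bookkeeping point is the degenerate case in which the $K$-means solver returns fewer than $K$ non-empty clusters; this is handled by restricting to non-empty clusters and extending $\pi$ arbitrarily on the unused labels, which does not affect the misclassification count.
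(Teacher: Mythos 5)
Your proof is correct and is essentially the canonical argument: the paper does not prove this lemma itself but cites Lemma 5.3 of \cite{lei2014consistency}, whose proof is exactly your three steps --- bounding $\norm{\widehat{X}-V}_F^2 \le 2(2+\epsilon)\norm{\widehat{V}-V}_F^2$ via the $(1+\epsilon)$-guarantee and feasibility of $V$, counting the nodes with $\norm{\widehat{X}_{i\cdot}-V_{i\cdot}}_2 \ge \delta/2$, and the triangle-inequality/pigeonhole construction of the permutation. One boundary-case caveat: the hypothesis as stated here is non-strict ($8(2+\epsilon)\norm{\widehat{V}-V}_F^2 \le n_{\min}\delta^2$), so your assertion that the bad count is \emph{strictly} below $n_{\min}$ --- which is what guarantees each $T_k$ is non-empty --- formally requires the strict inequality appearing in the original statement in \cite{lei2014consistency}.
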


  Another result we need is the concentration of a random (directed) graph adjacency matrix from \cite{le2017concentration}. A similar result was also obtained by \cite{lei2014consistency}.

\begin{lem}
\label{lemma:concentration}
Let $A$ be the adjacency matrix of a random graph on $n$ nodes with independent edges. Set $\e(A) = P = [p_{ij}]_{n\times n}$ and assume that $n\max_{ij}p_{ij}\le d$ for $d \ge C_0\log n$ and $C_0>0$. Then there exists a constant $C$ depending on $C_0$ such that 
$$\norm{A - P} \le C\sqrt{d}$$
with probability at least $1-n^{-1}$.
\end{lem}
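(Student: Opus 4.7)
The plan is to derive this lemma as a direct consequence of Theorem 2.1 in \cite{le2017concentration}, which establishes the $\sqrt{d}$ spectral concentration bound for symmetric adjacency matrices of inhomogeneous Erd\H{o}s--R\'enyi graphs. The only gap to close is that the cited result is stated under the assumption that the entries on and above the diagonal are independent, whereas here all $n^2$ entries of $A$ are independent (the directed case). This gap is handled by a standard Hermitian dilation trick.

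Specifically, I would pass to the $2n \times 2n$ symmetric matrix
\[
\mathcal{A} = \begin{pmatrix} 0 & A \\ A^T & 0 \end{pmatrix}, \qquad \mathcal{P} = \begin{pmatrix} 0 & P \\ P^T & 0 \end{pmatrix}.
\]
The upper-triangular entries of $\mathcal{A}$ are exactly the entries of $A$ and are therefore independent, and $\norm{\mathcal{A} - \mathcal{P}} = \norm{A - P}$ since the eigenvalues of the dilation are precisely the signed singular values of $A - P$. Because $2n \cdot \max_{ij} p_{ij} \le 2d$, applying the cited theorem to $\mathcal{A}$ yields the desired bound, with the factor of $2$ in the dimension and in the maximum entry absorbed into the constant $C$ and into the probability statement.

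For context, the underlying argument in \cite{le2017concentration} proceeds in two regimes. When $d \gg \log n$, a matrix Bernstein inequality gives the $\sqrt{d}$ bound directly. When $d$ is of order $\log n$, Bernstein is loose by a logarithmic factor because a few vertices of atypically large degree produce spectral outliers; this is remedied by a combinatorial discrepancy argument that uniformly controls $\sum_{i \in I, j \in J}(A_{ij} - p_{ij})$ over all pairs of vertex subsets $(I,J)$, combined with a union bound over subset sizes matched against Chernoff tails. The sparse regime is the main obstacle in a from-scratch proof, and since \cite{le2017concentration} already provides the discrepancy bound with essentially sharp constants, invoking their theorem after the dilation is the cleanest path.
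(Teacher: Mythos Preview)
Your proposal is correct. The paper does not actually supply a proof of this lemma; it simply states it as a known result from \cite{le2017concentration} (with \cite{lei2014consistency} mentioned as an alternative source) and uses it as a black box. Your reduction to the symmetric case via the Hermitian dilation is the standard way to extract the directed statement from the undirected one, and in fact the paper uses exactly this dilation argument in its proof of the very next lemma (the directed version of Corollary~3.6 in \cite{lei2019unified}), so your approach is fully in line with the paper's own techniques.
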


With these three lemmas, we are ready to prove Theorem~\ref{thm:CommunityConsistency}.

\begin{proof}[Proof of Theorem~\ref{thm:CommunityConsistency}]

Let $\tilde{V}^*$  be the matrix of right singular vectors for $\tilde{P}$ and let $\tilde{V}$ be the right singular vectors of $\tilde{A}$.  The assumption $n\norm{\tilde{P}}_{\infty} \ge C_0 \log{n}$ implies the condition for concentration of Lemma~\ref{lemma:concentration}. From Lemma~\ref{lemma:concentration}, we have
$$\norm{\tilde{V}-\tilde{V}^*}_F \le \frac{2\sqrt{2K}}{\sigma_K(\tilde{P})}\norm{\tilde{A}-\tilde{P}} \le \frac{2C\sqrt{2K}}{\sigma_K(\tilde{P})}\sqrt{n\norm{\tilde{P}}_{\infty}} $$
with probability at least $1-n^{-1}$. 

To apply Lemma~\ref{lemma:Kmeans}, note that from Proposition~\ref{lem:eigen}, the minimum distance between distinct rows in $\tilde{V}^*$ is at least $\sqrt{\frac{2}{n_{\max}}}$. Therefore, according to Lemma~\ref{lemma:Kmeans},
\begin{align*}
\sum_k \frac{|S_k|}{n_k} & \le \frac{1}{n_{\min}}\sum_{k=1}^K|S_k| \le \frac{1}{n_{\min}}8(2+\epsilon)\frac{\norm{\tilde{V}-\tilde{V}^*}_F^2}{\frac{2}{n_{\max}}} \\
& \le 32C^2(2+\epsilon) \frac{n_{\max}Kn\norm{\tilde{P}}_{\infty}}{n_{\min}\sigma_K(\tilde{P})^2} \le \frac{32C^2(2+\epsilon)}{\kappa'}\frac{Kn\norm{\tilde{P}}_{\infty}}{\sigma_K(\tilde{P})^2}
\end{align*}
as long as the condition of Lemma~\ref{lemma:Kmeans} holds, 
$$\frac{32C^2(2+\epsilon)}{\kappa'} \frac{{Kn}\norm{\tilde{P}}_{\infty}}{\sigma_K(\tilde{P})^2} \le 1 \, , $$
which can be guaranteed by the assumptions of Theorem~\ref{thm:CommunityConsistency} when setting $C_1 = \frac{32C^2(2+\epsilon)}{\kappa'} $. This completes the proof.

\end{proof}


\begin{proof}[Proof of Corollary~\ref{coro:CommunityConsistency}]
Let $f_1$ and $f_2$ be the distributions of $\bar{\theta}_i$ and $\lambda_i$, respectively. We have $\norm{\tilde{P}}_{\infty} \le \rho_n \gamma_1\norm{B}_{\infty}$ from A3, where both $\gamma_1$ and $\norm{B}_{\infty}$ are constants. We now need a bound on $\sigma_K(\tilde{P})$.

From Lemma~\ref{lem:eigen}, it follows that $\sigma_K(\tilde{P})$ is the $K$-th singular value of $F=\rho_n M$ where
$$M = 
\begin{pmatrix}
  \bar{\theta}_1B_{c_1,1}^{\lambda_1} & \bar{\theta}_1B_{c_1,2}^{\lambda_1} & \cdots & \bar{\theta}_1B_{c_1,K}^{\lambda_1} \\
  \bar{\theta}_2B_{c_2,1}^{\lambda_2} & \bar{\theta}_2B_{c_2,2}^{\lambda_2} & \cdots & \bar{\theta}_2B_{c_2,K}^{\lambda_2} \\
  \vdots  & \vdots  & \ddots & \vdots  \\
  \bar{\theta}_nB_{c_n,1}^{\lambda_n} & \bar{\theta}_nB_{c_n,2}^{\lambda_n} & \cdots & \bar{\theta}_nB_{c_n,K}^{\lambda_n} \\
 \end{pmatrix}
$$
and $\Delta = \diag(\sqrt{n_1}, \cdots, \sqrt{n_K})$. Under A3, there are  at most $m_1m_2K$ distinct rows of $M$. Denote the matrix with these $m_1m_2K$ rows by $\tilde{M} \in \bR^{(m_1m_2K)\times K}$, and write
\begin{equation}\label{eq:F-form}
F = \rho_n \tilde{Z}\tilde{M} \, , 
\end{equation}
where $F$ is the same quantity in the proof of Proposition~\ref{lem:eigen},  $\tilde{Z} \in \bR^{n\times (m_1m_2K)}$ with exactly one $1$ in each row and zeros in the other positions. $\tilde{Z}$ gives the correspondence from each row of $M$ to the rows of $\tilde{M}$. Let $\tilde{n}_k$ be the number of times that the $k$th row of $\tilde{M}$ appears in rows in $M$, and define $\tilde{\Delta} = \diag(\sqrt{\tilde{n}_1}, \cdots, \sqrt{\tilde{n}_{m_1m_2K}})$. It is easy to check $\tilde{Z}\tilde{\Delta}^{-1}$ is an orthogonal matrix. Therefore,
\begin{equation}\label{eq:sigma-scale}
\sigma_K(\tilde{P}) = \sigma_K(\rho_n \tilde{\Delta}\tilde{M}\Delta) \ge \lambda \rho_n \min_{i,j,k}\sqrt{\tilde{n}_{ijk}}\min_{k}\sqrt{n_k} \, , 
\end{equation}
where $\lambda = \sigma_K(\tilde{M})$. 

By \ref{A2}, \ref{A3}, and Hoeffding's inequality, we have
$$\min_{i,j,k}\tilde{n}_{ijk} \ge C_2 n$$
with probability at least $1-\exp(-\gamma_2 n)$ for some constants $\gamma_2$, $C_2>0$ depending on $\kappa', K$ and $f_1$, $f_2$. Under this event, we have
$$\sigma_K(\tilde{P}) \ge \sqrt{C_2\kappa'}n\rho_n.$$
Finally, applying Theorem~\ref{thm:CommunityConsistency} directly gives 
$$\sum_k \frac{|S_k|}{n_k} \le C_1\frac{Kn\norm{\tilde{P}}_{\infty}}{\sigma_K(\tilde{P})^2} \le \frac{ C_1}{C_2\kappa'}\frac{K}{n\rho_n} $$
with probability at least $1-n^{-1} - e^{-\gamma_1 n}- e^{-\gamma_2 n} \ge 1-2n^{-1}$ for sufficiently large $n$. Setting $C' = \frac{C_1K}{C_2\kappa'}$ completes the proof. 
\end{proof}

\begin{lem}[Directed version of Corollary 3.6 in \cite{lei2019unified}]\label{lem:lei2019}
Let $\tilde{A} \in \{0,1\}^{n\times n}$ be an adjacency matrix of a directed network with independent Bernoulli entries and the expectation $\tilde{P} \in [0,1]^{n\times n}$. Assume  the rank of $\tilde{P}$ is $K$ and $K$ is fixed. Let $\tilde{A} = \hat{U}\hat{\Sigma} \hat{V}^T$ and $U \Sigma V^T$ be the rank $K$ SVD of $\tilde{A}$ and $\tilde{P}$, respectively.  If 
$$\Sigma_{KK} \ge C_0 n\norm{\tilde{P}}_{\infty},$$
and $n\norm{P}_{\infty} \ge C_0\log{n}$ for some constant $C_0 >0$, 
then with probability at least $1-n^{-1}$, we have
$$\max(\norm{\hat{U} - U}_{2,\infty}, \norm{\hat{V} - V}_{2,\infty} ) \le C\sqrt{\frac{\log{n}}{n\norm{\tilde{P}}_{\infty}}}\max\left( \norm{U}_{2,\infty}, \norm{V}_{2,\infty}     \right).$$
\end{lem}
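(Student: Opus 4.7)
The plan is to reduce the asymmetric statement to the symmetric Corollary~3.6 of \cite{lei2019unified} via the standard Hermitization/dilation. I would introduce the $2n\times 2n$ symmetric matrices
\begin{equation*}
M = \begin{pmatrix} 0 & \tilde A \\ \tilde A^T & 0 \end{pmatrix}, \qquad \bar M = \begin{pmatrix} 0 & \tilde P \\ \tilde P^T & 0 \end{pmatrix},
\end{equation*}
so that the strict upper-triangular entries of $M$ are independent Bernoullis with means given by the corresponding entries of $\bar M$, and the pair $(M,\bar M)$ fits the hypotheses of the symmetric result. A direct calculation shows that the nonzero spectrum of $\bar M$ is $\{\pm \sigma_k(\tilde P):k=1,\ldots,K\}$ with orthonormal eigenvectors $w_k^{\pm} = \tfrac{1}{\sqrt 2}(U_{\cdot k}^T, \pm V_{\cdot k}^T)^T$. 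Hence $\bar M$ has rank $2K$, $|\lambda_{2K}(\bar M)| = \sigma_K(\tilde P)$, and $\norm{\bar M}_\infty = \norm{\tilde P}_\infty$, so the hypotheses $\sigma_K(\tilde P)\ge C_0 n\norm{\tilde P}_\infty$ and $n\norm{\tilde P}_\infty \ge C_0\log n$ translate (up to a factor of $2$ absorbed in constants) into the signal and density conditions required by the symmetric result at ambient dimension $2n$.

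Assembling the top-$2K$ eigenvectors of $\bar M$ and of $M$ into
\begin{equation*}
W = \tfrac{1}{\sqrt 2}\begin{pmatrix} U & U \\ V & -V \end{pmatrix}, \qquad \hat W = \tfrac{1}{\sqrt 2}\begin{pmatrix} \hat U & \hat U \\ \hat V & -\hat V \end{pmatrix} \in \bR^{2n\times 2K},
\end{equation*}
a row-by-row computation gives $\norm{W}_{2,\infty} = \max(\norm{U}_{2,\infty},\norm{V}_{2,\infty})$ and, analogously,
\begin{equation*}
\norm{\hat W - W}_{2,\infty} \;=\; \max\!\bigl(\norm{\hat U - U}_{2,\infty},\; \norm{\hat V - V}_{2,\infty}\bigr).
\end{equation*}
Applying Corollary~3.6 of \cite{lei2019unified} to $(M,\bar M)$ at rank $2K$ then yields, with probability at least $1-(2n)^{-1}\ge 1-n^{-1}$, a bound on $\norm{\hat W - W}_{2,\infty}$ of order $\sqrt{\log(2n)/(2n\norm{\bar M}_\infty)}\,\norm{W}_{2,\infty}$, which is exactly the stated inequality after absorbing constants.

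The principal technical obstacle is that Corollary~3.6 controls $\hat W$ only up to a $2K\times 2K$ orthogonal rotation $O$ acting on the top eigenspace, whereas the asymmetric statement compares $\hat U,\hat V$ to $U,V$ column by column through the SVD. I would argue that $O$ can be chosen to respect the splitting of the top-$2K$ eigenspace into the ``positive'' and ``negative'' halves spanned by $\{w_k^+\}$ and $\{w_k^-\}$, so that the block pattern of $\hat W$ is preserved. When the nonzero singular values of $\tilde P$ are simple this is automatic, since each $w_k^{\pm}$ lives in a one-dimensional eigenspace and the only ambiguity is the sign already built into the SVD sign convention. For repeated singular values I would invoke a Davis--Kahan-type argument to show that the cross-block entries of the optimal rotation are themselves controlled by the same perturbation bound, inflating the constant by at most a fixed factor. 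Once this block alignment is in place, restricting $\hat W - W$ to its top and bottom $n\times K$ subblocks recovers the individual bounds for $\hat U - U$ and $\hat V - V$.
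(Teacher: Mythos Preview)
Your proposal is correct and follows exactly the same route as the paper: the Hermitian dilation $\bar M=\begin{pmatrix}0&\tilde P\\ \tilde P^T&0\end{pmatrix}$ together with its empirical counterpart, followed by a direct application of Corollary~3.6 of \cite{lei2019unified}. The paper's own proof is in fact more terse---it simply writes down the dilated matrices, observes that the eigenvectors of $\tilde P^s$ are $[U^T,V^T]^T$, and invokes the corollary---so your treatment of the $\pm$ eigenstructure and the block-rotation issue is more careful than what appears in the paper itself.
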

\begin{proof}[Proof of Lemma~\ref{lem:lei2019}]
  Let
$$\tilde{P}^{s} = \begin{bmatrix}
0  & \tilde{P}\\
\tilde{P}^T & 0
\end{bmatrix}  \text{  and  }
\tilde{A}^s = 
\begin{bmatrix}
0  & \tilde{A}\\
\tilde{A}^T & 0
\end{bmatrix}.
$$
Then $\tilde{A}^s$ is a symmetric matrix with independent Bernoulli entries in the upper triangular positions, drawn with probabilities $\tilde{P}^s$.  The eigenvectors of $\tilde{P}^s$ are $[U^T, V^T]^T$.  The result follows directly by applying Corollary 3.6 of \cite{lei2019unified} with the additional constraint on $\Sigma_{KK}$, which corresponds to formula (200) of \cite{lei2019unified}. 
\end{proof}

 \begin{proof}[Proof of Theorem~\ref{thm:CommunityConsistency-mst}]
By Proposition~\ref{lem:eigen} and \ref{A2}, the minimum spanning tree alrogithm will perfectly recover communities if 
 \begin{equation}\label{eq:distance-separation}
 \max_i\norm{\hat{V}_{i\cdot} - V_{i\cdot}} < \frac{\sqrt{2}}{4\sqrt{\left (1-\left(K-1\right)\kappa'\right)n}} \le  \frac{\sqrt{2}}{4\sqrt{n_{\max}}} \le \frac{1}{4}\min_{c_i \ne c_j} \norm{V_{i\cdot} - V_{j\cdot}}.
 \end{equation}
 This is because \eqref{eq:distance-separation} ensures that any between-community edge would have a higher weight than any within-community edge.  In the minimum spanning tree, between any two communities, there is at most one edge connecting them, and in total there would be exactly $K-1$ between-community edges.  Therefore, removing the $K-1$ edges with largest weights results in the correct community partition.

 It remains to show \eqref{eq:distance-separation}.  Lemma~\ref{lem:lei2019} gives
 $$ \max_i\norm{\hat{V}_{i\cdot} - V_{i\cdot}}  = \norm{\hat{V}-V}_{2,\infty} \le C\sqrt{\frac{\log{n}}{n\norm{\tilde{P}}_{\infty}}}\max\left( \norm{U}_{2,\infty}, \norm{V}_{2,\infty}     \right)$$
 with probability at least $1 - n^{-1}$, which implies \eqref{eq:distance-separation}.
 
 \end{proof}

\begin{proof}[Proof of Corollary~\ref{coro:CommunityConsistency-mst-NSBM}]
To apply Theorem~\ref{thm:CommunityConsistency-mst}, we just need to show the two conditions \eqref{eq:generic-basic-signal} and \eqref{eq:generic-strong-signal} hold.    Equation \eqref{eq:sigma-scale} in the proof of Corollary~\ref{coro:CommunityConsistency} implies \eqref{eq:generic-basic-signal}.   As discussed after Corollary~\ref{coro:CommunityConsistency-mst-NSBM}, to show $\eqref{eq:generic-strong-signal}$ when $n\rho_n/\log{n} \to \infty$, it is sufficient to show that $\tilde{P}$ is perfectly incoherent, that is 
$$\norm{U}_{2,\infty} = O(1/\sqrt{n}) \text{~~~and~~~} \norm{V}_{2,\infty} = O(1/\sqrt{n}).$$

From Proposition~\ref{lem:eigen}, we know that $V$ has only $K$ distinct rows and each unique row appears at least $n_{\min}$ times, from \ref{A2}. Therefore, $\norm{V}{2,\infty} = O(1/\sqrt{n})$. 

The proof of Proposition~\ref{lem:eigen} indicates that $U$ consists of the left singular vectors of $F$, which is given by \eqref{eq:F-form}. Using the same notation, we have
$$F = \tilde{Z}\tilde{\Delta}^{-1}\Delta \tilde{M} \, , $$
where $\Delta \tilde{M}$ is an $m_1m_2K\times K$ matrix. Again, $\tilde{Z}\tilde{\Delta}^{-1}$ is an orthonormal matrix. Therefore, $U = \tilde{Z}\tilde{\Delta}^{-1}\tilde{U}$ where $\tilde{U}$ is the left singular vector of $\Delta \tilde{M}$. Hence $U$ only has $m_1m_2K$ distinct rows. Since we assume $m_1$,  $m_2$, and $K$ to be fixed, we also know $\norm{U}_{2,\infty} = O(1/\sqrt{n})$.

\end{proof}

\subsection{Proofs for parameter estimation under the NSBM}\label{appsecsec:PE-proof}

 \begin{proof}[Proof of Theorem~\ref{thm:IdeaEstimationErrorBound}]
Without loss of generality, let us assume  the first $n_1$ nodes are from community 1 and focus on estimating parameters in community $1$. The same argument can be repeated for the other $K-1$ communities. Note that consistency trivially holds for $B_{1l}=0$, so for this proof we focus on the case $B_{1l}>0$. For each $l\in [K]$ such that $B_{1l}>0$, define
$$\tilde{P}_{il} = \theta_iB_{1l}^{\lambda_i}.$$
By Bernstein inequality, we have
\begin{equation}\label{eq:Bernstein-1}
\p\left( \left|\frac{\sum_{j\in G_{l}}A_{ij}}{n_{l}} - \tilde{P}_{il}\right| > t \right) \le 2\exp \left(-\frac{n_lt^2/2}{\tilde{P}_{il} + t/3}\right).
\end{equation}
To make the concentration nontrivial, we need to require at least $t \le \tilde{P}_{il}$. Hence we have $\tilde{P}_{il} \ge t/3$, leading to
\begin{equation}\label{eq:basicConcentration}
  \p  \left( \left|\frac{\sum_{j\in G_{l}}A_{ij}}{n_{l}} - \tilde{P}_{il}\right| > t\right)
  \le 2\exp\left(-\frac{n_lt^2}{4\tilde{P}_{il}}\right) \le 2\exp\left(-\frac{\kappa' nt^2}{4\tilde{P}_{il}}\right).
\end{equation}

A useful special case is $t = \delta_n \tilde{P}_{il}$, for which \eqref{eq:basicConcentration} gives 
\begin{equation}\label{eq:RelativeConcentration}
\p(|T_{il} - \tilde{P}_{il}| > \delta_n \tilde{P}_{il} ) \le 2\exp\left(-\frac{\kappa' }{4}n\delta_n^2\tilde{P}_{il}\right).
\end{equation}
When $l=1$, we have $\tilde{P}_{i1} = \theta_i$.  If $\theta_i \ge \frac{8}{\kappa'}\frac{\log^4{n}}{n}$, setting $t = \theta_i/\log{n}$ in \eqref{eq:RelativeConcentration} gives
\begin{equation}\label{eq:thetai}
\p\left(|\hat{\theta}_i - \theta_i|/\theta_i > \frac{1}{\log{n}} \right)\le  2\exp\left(-\frac{\kappa' n\theta_i}{4\log^2{n}}\right) \le 2\exp(-2\log^2{n}).
\end{equation}
Therefore, when $\min_i \theta_i \ge  \frac{8}{\kappa'}\frac{\log^4{n}}{n}$,  taking the union over all $i \in [n]$ gives 
$$\p(\max_i|\hat{\theta}_i - \theta_i|/\theta_i > 1/\log{n})\le 2n\exp(-2\log^2{n}) \le \exp(-\log^2{n}) \le n^{-1}.$$
for sufficiently large $n$.  This finishes the proof of Part 1.

Recall that we only need to consider $l$ such that $B_{1l > 0}$. Using the same  \eqref{eq:RelativeConcentration}, we can also see that  $T_{il} > 0$ with high probability, so we will treat $\log{T_{il}}$ as well-defined from now on.

Let $\mu_{il} = \log(\tilde{P}_{il})$. We now proceed to boundl $Y_{il} - \mu_{il}  = \log(T_{il}) -  \log(\tilde{P}_{il})$. A useful inequality is, for $x$, $y > 0$, 
 $$|\log(x)-\log(y)| \le \frac{|x-y|}{\min(x,y)} \, . $$
We have
 \begin{align*}
 \p(|Y_{il} - \mu_{il}| > t) &=  \p(|Y_{il} - \mu_{il}| > t, \; |T_{il}-\tilde{P_{il}}| \le \tilde{P}_{il}/2) +  \p(|Y_{il} - \mu_{il}| > t, |T_{il}-\tilde{P_{il}}| > \tilde{P}_{il}/2)\\
 & \le \p(|Y_{il} - \mu_{il}| > t, |T_{il}-\tilde{P_{il}}| \le \tilde{P}_{il}/2) +  \p( |T_{il}-\tilde{P_{il}}| > \tilde{P}_{il}/2)\\
 & \le \p\left(    \frac{|T_{il} - \tilde{P}_{il} |}{\min(T_{il} ,\tilde{P}_{il} )} >t   , |T_{il}-\tilde{P_{il}}| \le \tilde{P}_{il}/2\right) +  \p( |T_{il}-\tilde{P_{il}}| > \tilde{P}_{il}/2)\\
 & \le \p\left(    \frac{|T_{il} - \tilde{P}_{il} |}{\tilde{P}_{il}/2} >t   , |T_{il}-\tilde{P_{il}}| \le \tilde{P}_{il}/2\right) +  \p( |T_{il}-\tilde{P_{il}}| > \tilde{P}_{il}/2)\\
 & \le \p(|T_{il} - \tilde{P}_{il}|/\tilde{P}_{il} > t/2, |T_{il}-\tilde{P_{il}}| \le \tilde{P}_{il}/2) +  \p( |T_{il}-\tilde{P_{il}}| > \tilde{P}_{il}/2)\\
 & \le \p(|T_{il} - \tilde{P}_{il}|/\tilde{P}_{il} > t/2) +  \p( |T_{il}-\tilde{P_{il}}| > \tilde{P}_{il}/2) \, . \\
 \end{align*}
By setting $\delta_n = t/2$ and $1/2$ in \eqref{eq:RelativeConcentration}, we get 
\begin{equation}\label{eq:ConcentrationLog-general}
\p(|Y_{il} - \mu_{il}| > t)  \le 2\exp(-\frac{\kappa'}{16}nt^2\tilde{P}_{il}) + 2\exp(-\frac{\kappa'}{16}n\tilde{P}_{il}).
\end{equation}

Moreover, combining \eqref{eq:ConcentrationLog-general} and its version when $l = 1$ and constraining $t < 1$, we have
\begin{equation}\label{eq:denominator-general}
\p(|(Y_{i1}-Y_{il}) - (\mu_{i1}-\mu_{il})| > t)  \le 4\exp(-\frac{\kappa'}{16}nt^2\tilde{P}_{il}) + 4\exp(-\frac{\kappa'}{16}n\tilde{P}_{il}) \le 8\exp(-\frac{\kappa'}{16}nt^2\tilde{P}_{il}).
\end{equation}
In particular, under \ref{A3}, there is a constant  $\phi \le \min_{ij}\bar{\theta}_iB_{c_ic_j}^{\lambda_i}$.  Now let $\eta = 2\max_{ij}B_{ij}$, $t = \frac{1}{\eta \log{n}}$. If we assume 
$$\rho_n \ge \frac{16\eta^2}{\phi\kappa'}\frac{\log^4{n}}{n} \, , $$ 
then 
\begin{equation}\label{eq:denominator-logB}
\p(|(Y_{i1}-Y_{il}) - (\mu_{i1}-\mu_{il})| >  \frac{1}{\eta \log{n}})  \le 8\exp(-\frac{\kappa'}{16\eta^2}\frac{n}{\log^2{n}}\phi\rho_n) \le 8\exp(-\log^2{n}).
\end{equation}

Recall that we have $\mu_{i1}-\mu_{il} = -\lambda_i \log(B_{1l})$. With the constraint $\sum_{i \in G_1}\lambda_i = n_1$, we have $\sum_{i \in G_1}(\mu_{i1}-\mu_{il})/n_1 = -\log(B_{1l}) $. Therefore, applying \eqref{eq:denominator-logB} to all nodes in $G_1$, we have
\begin{align}\label{eq:numerator-logB}
  \p \left(|\log{\hat{B}_{1l}} - \log{B_{1l}}| > \frac{1}{\eta \log{n}}\right) &=
\p \left( \left|\frac{1}{n_1}\sum_{i: \V{c}_i = 1}[(Y_{i1}-Y_{il}) - (\mu_{i1}-\mu_{il})]\right|> \frac{1}{\eta \log{n}} \right) \notag \\
& \le 8n_1\exp(-\log^2{n}).
\end{align}
Because the function $\exp(x)$ is convex, for any $x, y>0$ we have
$$|\exp(x)-\exp(y)| \le |x-y|\exp(\max(x,y)).$$

For sufficiently large $n$, $\frac{1}{\eta \log{n}} < \log 2$. Under the event of \eqref{eq:numerator-logB},
$$|\hat{B}_{kl} - B_{kl}| \le  \exp(\max(\log{B_{kl}}, \log{\hat{B}_{kl}})) \frac{1}{\eta \log{n}} \le \exp(\log{B_{kl}}+\log{2})\frac{1}{\eta \log{n}} \le \eta \frac{1}{\eta \log{n}} \le \frac{1}{\log{n}}.$$
Part 2 of the theorem comes directly from \eqref{eq:numerator-logB} after taking the union of at most $K^2$ events for community pairs with nonzero $B_{kl}$. The event probability is then controlled by $1- 8Kn\exp(-\log^2{n}) \ge 1 - n^{-1}$ for sufficiently large $n$.

This argument can be improved by leveraging independence between different rows of $\tilde{A}$, instead of directly taking the union of events across $i$. That approach can slightly reduce the requirement on $\rho_n$.

For Part 3, first note that because of the previous discussion, we only consider the settings when $T_{il} > 0$ for $B_{1l} > 0$. Therefore, we treat $\Psi_1$ as known.  

First consider the true parameter values.  For any $l \in \Psi_1$, define $b_{l} = \log(B_{1l})$ for $B_{1l}>0$. We have
$$\mu_{il}-\mu_{i1} = \log{\tilde{P}_{il}} - \log{\tilde{P}_{i1}} = \lambda_i (b_l-b_1), i \in G_1.$$
Summing up across $\psi_1$, 
$$\sum_{l \in \Psi_1} (\mu_{il}-\mu_{i1} ) = \lambda_i \sum_{l \in \Psi_1}(b_l-b_1).$$

Under the identifiability constraint, we also have
$$\frac{1}{n_1}\sum_{i \in G_1}\sum_{l \in \Psi_1} (\mu_{il}-\mu_{i1}) = \sum_{l \in \Psi_1}(b_l-b_1).$$
The two identities give
$$\frac{\sum_{l \in \Psi_1} (\mu_{il}-\mu_{i1})}{\frac{1}{n_1}\sum_{i \in G_1}\sum_{l \in \Psi_1} (\mu_{il}-\mu_{i1}) } = \lambda_i.$$

To obtain an error bound for estimated parameters, we will separately bound  the numerator and the denominator above.    First, apply \eqref{eq:denominator-general} for all $l \in \Psi_1$, obtaining 
\begin{equation}\label{eq:denominator}
\p(|\sum_{l \in \Psi_1}(Y_{i1}-Y_{il}) - \sum_{l \in \Psi_1}(\mu_{i1}-\mu_{il})| > Kt)  \le 8K\exp(-\frac{\kappa'}{4}nt^2\tilde{P}_{il}).
\end{equation}
Then applying \eqref{eq:denominator} across $i\in G_1$, we have
\begin{equation}\label{eq:numerator}
\p\left(\left|\frac{\sum_{i\in G_1}\sum_{l \in \Psi_1}(Y_{i1}-Y_{il})}{n_1} - \frac{\sum_{i\in G_1}\sum_{l \in \Psi_1}(\mu_{i1}-\mu_{il})}{n_1})\right| > Kt\right)  \le 8n_1K\exp(-\frac{\kappa'}{4}nt^2\tilde{P}_{il}).
\end{equation}
Another useful inequality we need, for any $x, y, x_0, y_0$ such that $x\cdot x_0 > 0, y \cdot y_0 > 0$, is 
 \begin{equation}\label{eq:ratiobound}
   \left|\frac{x}{y} - \frac{x_0}{y_0}\right|
   \le
   \sqrt{\frac{1}{\min(|y|, |y_0|)^2}+\frac{\max(|x|,|x_0|)^2}{\min(|y|, |y_0|)^4}}\left(|x-x_0| + |y-y_0|  \right).
 \end{equation}
 
By \ref{A3}, there are constants $\alpha, \beta > 0$ such that 
$$|\lambda_i| < \alpha\text{~~~and~~~} 1/\beta < \min_k\sum_{l \in \Psi_k}(b_l-b_1) \le \max_k \sum_{l \in \Psi_k}(b_l-b_1) < \beta.$$
Under the complement event of the union of \eqref{eq:denominator} and \eqref{eq:numerator} and assuming $Kt < \beta/2$, we apply \eqref{eq:ratiobound} with 
$$x_0 = \sum_{l \in \Psi_1}(\mu_{i1}-\mu_{il}), y_0 = \frac{\sum_{i\in G_1}\sum_{l \in \Psi_1}(\mu_{i1}-\mu_{il})}{n_1} \, , $$
$$x =\sum_{l \in \Psi_1}(Y_{i1}-Y_{il}) , y = \frac{\sum_{i\in G_1}\sum_{l \in \Psi_1}(Y_{i1}-Y_{il})}{n_1}.$$
This gives
\begin{align*}
|\hat{\lambda}_i-\lambda_i| \le  \sqrt{\frac{1}{(|y_0|/2)^2}+\frac{(|x_0|+Kt)^2}{ (|y_0|/2)^4}}2Kt \le 2K\sqrt{4\beta^2+16\beta^4(\alpha+3\beta/2)^2} t
\end{align*}

Now set $t = \frac{1}{2K\sqrt{4\beta^2+16\beta^4(\alpha+3\beta/2)^2}}\frac{1}{\log{n}}$ in \eqref{eq:denominator} and \eqref{eq:numerator}. It is not difficult to see that as long as
$$\rho_n \ge \frac{16K^2(4\beta^2+16\beta^4(\alpha+3\beta/2)^2)}{\phi\kappa'}\frac{\log^4{n}}{n},$$ 
we have
\begin{equation}\label{eq:lambda-bound}
\p(|\hat{\lambda}_i-\lambda_i| > \frac{1}{\log{n}}) \le 8(n_1+1)\exp(-\log^2{n}).
\end{equation}

Again, repeating this argument for all $i \in [n]$ leads to
\begin{equation}\label{eq:lambda-final}
\p(\max_i |\hat{\lambda}_i-\lambda_i| \le \frac{1}{\log{n}}) \ge 1 - 16Kn^2\exp(-\log^2{n}) \ge 1-n^{-1}
\end{equation}
for sufficiently large $n$. This completes the proof of Part 3.

 \end{proof}

\end{appendix}

\end{document}